\title{On the Existence of Hamiltonian Paths for History Based Pivot Rules on Acyclic Unique Sink Orientations of Hypercubes}
\author{
Yoshikazu Aoshima\thanks{Department of Computer Science
Graduate School of Information Science and Technology,
The University of Tokyo / ERATO-SORST Quantum Computation and
Information Project, JST.
\texttt{y-aoshima@is.s.u-tokyo.ac.jp}}
\and
David Avis\thanks{School of Informatics, Kyoto University
and School of Computer Science, McGill University \texttt{avis@cs.mcgill.ca}}
\and
Theresa Deering\thanks{School of Computer Science, McGill University
\texttt{theresa.deering@mail.mcgill.ca}}
\and
Yoshitake Matsumoto\thanks{Department of Computer Science
Graduate School of Information Science and Technology,
The University of Tokyo (until 2009)/ Google Japan Inc. \texttt{ymatsu@is.s.u-tokyo.ac.jp}}
\and
Sonoko Moriyama\thanks{Graduate School of Information Sciences, Tohoku University \texttt{moriso@dais.is.tohoku.ac.jp}}
}
\begin{document}
\setlength{\abovedisplayskip}{1pt}   \setlength{\belowdisplayskip}{1pt}   \theoremstyle{definition}
\newtheorem{thm}{Theorem}[section]
\newtheorem{lem}{Lemma}[section]
\newtheorem{dfn}{Definition}[section]
\newtheorem{col}{Corollary}[section]

\makeatletter
\newcommand{\figcaption}[1]{\def\@captype{figure}\caption{#1}}
\newcommand{\tblcaption}[1]{\def\@captype{table}\caption{#1}}
\makeatother

\maketitle
\begin{abstract}
An acyclic USO on a hypercube is formed by directing its edges in such as way 
that the digraph is acyclic and each face of the hypercube has a unique sink and a unique source. 
A path to the global sink of an acyclic USO can be modeled as pivoting in a unit 
hypercube of the same dimension with an abstract objective function, and vice versa. 
In such a way, Zadeh's 'least entered rule' and other history based pivot rules can be 
applied to the problem of finding the global sink of an acyclic USO. In this 
paper we present some theoretical and empirical results on the existence of acyclic USOs
for which the various history based pivot rules 
can be made to follow a
Hamiltonian
path. 
In particular, we develop an algorithm that can enumerate all such paths up to dimension 6 using 
efficient pruning techniques. We show that Zadeh's original rule
admits Hamiltonian paths up to dimension 9 at least, and prove that most of the other rules do not for all
dimensions greater than 5.
\end{abstract}

\section{Introduction}

It is now over 30 years since Khachian showed that linear programming
problems can be solved in polynomial time \cite{khachiian1980polynomial}.
His ellipsoid algorithm and subsequent interior point methods are not,
however, strongly polynomial time algorithms and no such
algorithms are known. Pivoting algorithms, such as Dantzig's simplex
method \cite{dantzig2003linear} still offer the possibility
of being strongly polynomial. One reason for this is that pivoting algorithms
follow a path on the graph defined by
the skeleton of a polyhedron, and it is widely believed
that the diameter of this graph is polynomially bounded in the
dimensions of the linear program.

In fact, Hirsch conjectured that the diameter of
any $d$-dimensional polytope with $n$ facets, where $n > d \geq 2$, 
is less than or equal to $n-d$.
Very recently Santos found that this conjecture is false, by exhibiting
a polytope with dimension $d=43$ and $n=86$ facets 
with diameter equal to $n-d+1$ \cite{santos2010counterexample}. 
Nevertheless, the belief that the diameter is polynomial is still strong.
The subexponential bounds of Kalai \cite{kalai1992subexponential}
and Matou{\v{s}}ek, Sharir, Welzl \cite{MSW92}
also give further grounds for hope.
These papers use randomized pivot selection rules, and no deterministic rules
that achieve these subexponential bounds are known.

The simplex method is a family of algorithms, with each member of the family
being determined by a pivot selection rule.
In practice, Dantzig's original rule works extremely well.
However, Klee and Minty \cite{klee1972good} constructed 
a case where the simplex method 
using this rule follows an exponential length path on a family
of suitably stretched hypercubes, since called 
{\em Klee-Minty cubes}. 
In fact it visits every vertex, that is, Hamiltonian path, of the hypercube 
(see Figure \ref{fig:klee-minty}).
\begin{figure}[ht]
\begin{center}
\subfigure[Dimension 1 or 2]{
\includegraphics[height=0.1\textheight  ]{./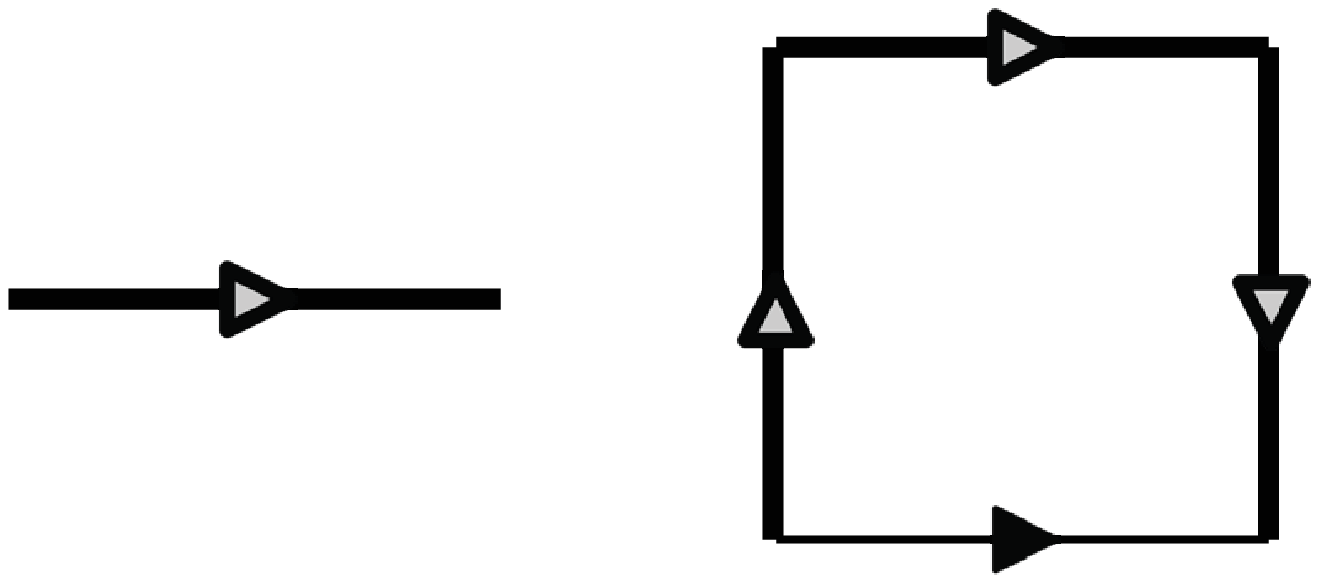}
\label{fig:1or2dimensionalKM}
}
\subfigure[Dimension 3]{
\includegraphics[height=0.1\textheight  ]{./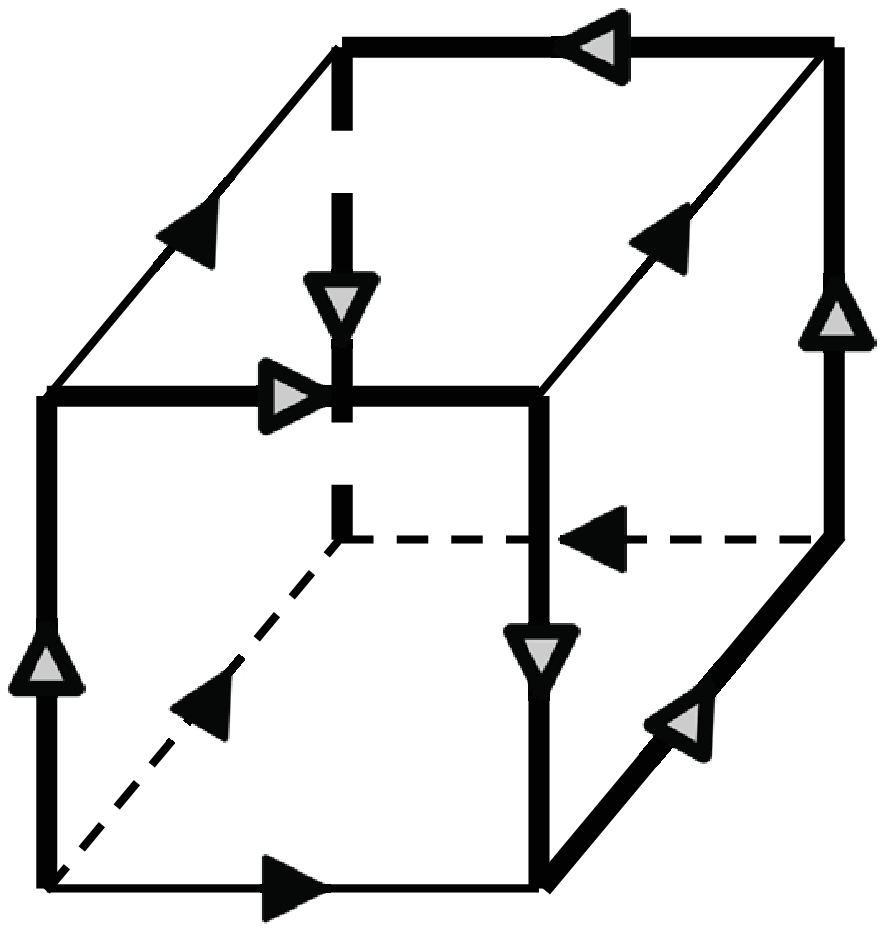}
\label{fig:3dimensionalKM}
}
\subfigure[Dimension 4]{
\includegraphics[height=0.2\textheight  ]{./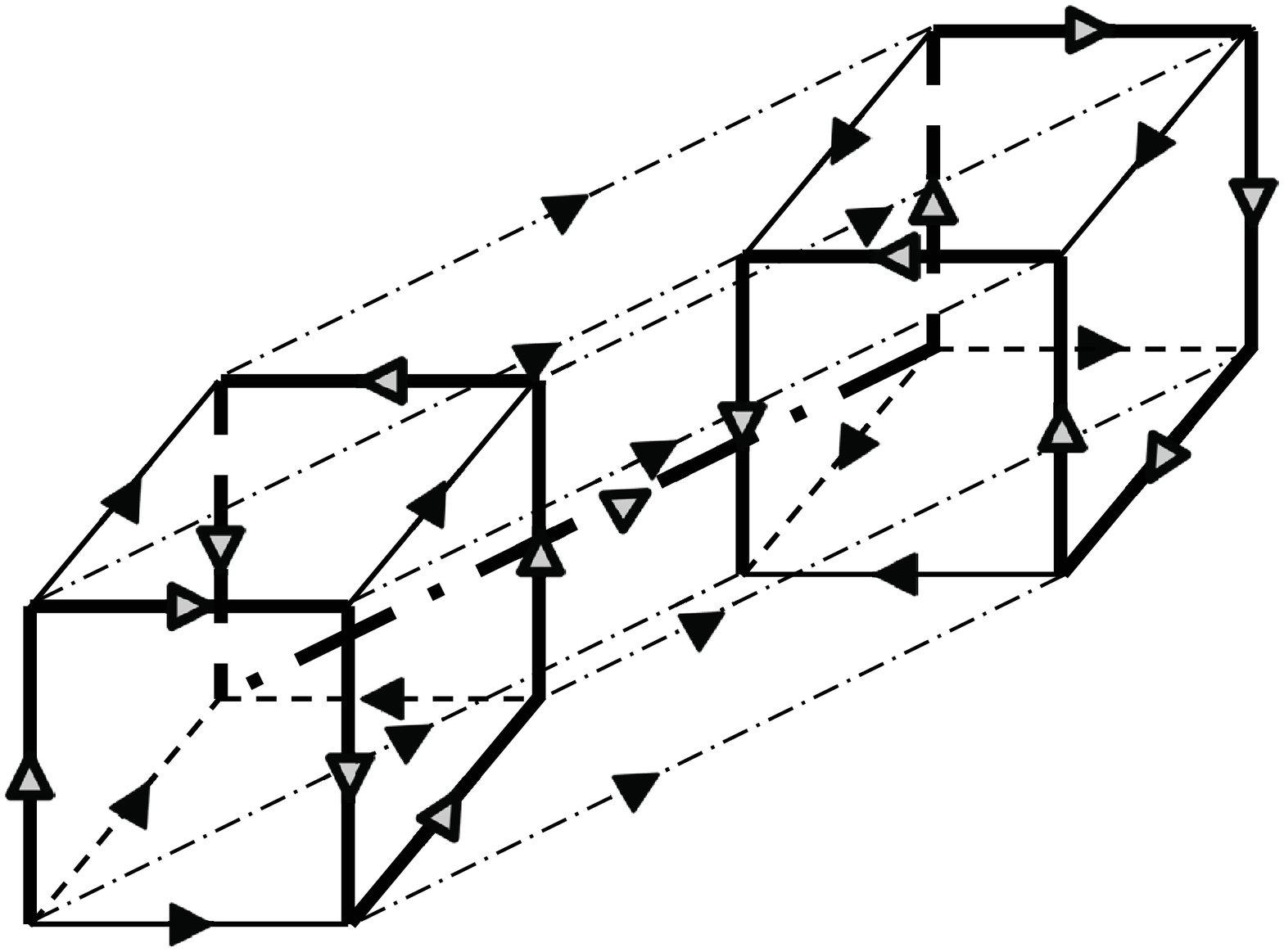}
\label{fig:4dimensionalKM}
}
\end{center}
\caption{Klee-Minty cube. 
}
\label{fig:klee-minty}
\end{figure}
Subsequent research demonstrated that many other pivot rules take 
exponential time on variants of the Klee-Minty examples.
Such pivot rules include the maximum improvement rule 
(Jeroslow \cite{jeroslow1973simplex}) and Bland's rule 
(Avis and Chv\'atal \cite{avis1978notes}).

There are still some pivot rules for which the behaviour of the simplex method is unknown.
A particularly interesting set of these rules are the history based pivot rules, of which the best known
is the {\em least entered rule} proposed by Zadeh in a 1980 Stanford University
Technical Report, that was recently reprinted \cite{zadeh2009worst}. 
Very recently, 30 years after it originally appeared, Friedmann showed that this rule requires 
at least sub-exponential time in the worst case such as $2^{\Omega(\sqrt{d})}$ \cite{Fr10}. A non-trivial
upper bound on Zadeh's rule is still unknown.

To motivate the least entered rule, Zadeh pointed out a characteristic of Klee-Minty examples: 
some variables pivot very few times
and other variables pivot an exponential number of times.
Zadeh's pivot rule avoids this behaviour by making each variable pivot, roughly, the same
number of times. In this way, it behaves similarly to the random pivot selection rules
mentioned above. In Section \ref{section:searching} 
we show that if Zadeh's rule follows a 
Hamiltonian path on a hypercube, then indeed, each variable must pivot an
exponential number of times in the dimension $d$ of the cube.

Zadeh's rule differs from the former pivot rules in that it
uses information from the entire pivot history up to that point.
Such pivot rules are called {\em history based pivot rules}. 
Besides Zadeh's rule, these include the least-recently basic rule 
\cite{cunningham1979theoretical}, the least-recently considered rule 
\cite{cunningham1979theoretical}, the least-recently entered rule 
\cite{fathi1986affirmative}, and the least iterations in the basis rule \cite{avis2009history}.
We remark that for each of these pivot rules, there exists no known exponential lower bound.

In this paper, we study the behaviour of history based pivot rules on an abstraction
of linear programming known as acyclic unique sink orientations (AUSOs) of hypercubes,
that were introduced by Szab\'{o} and Welzl \cite{szabo2001unique}.
These are orientations of the hypercube so that the resulting directed graph is
acyclic, and each face of each dimension has a unique sink and a unique source, 
see Figure \ref{fig:AUSO}.
The goal is to find the unique sink of the hypercube. 

\begin{figure}[!ht]
\begin{center}
\subfigure[]{
\includegraphics[height=0.1\textheight ]{./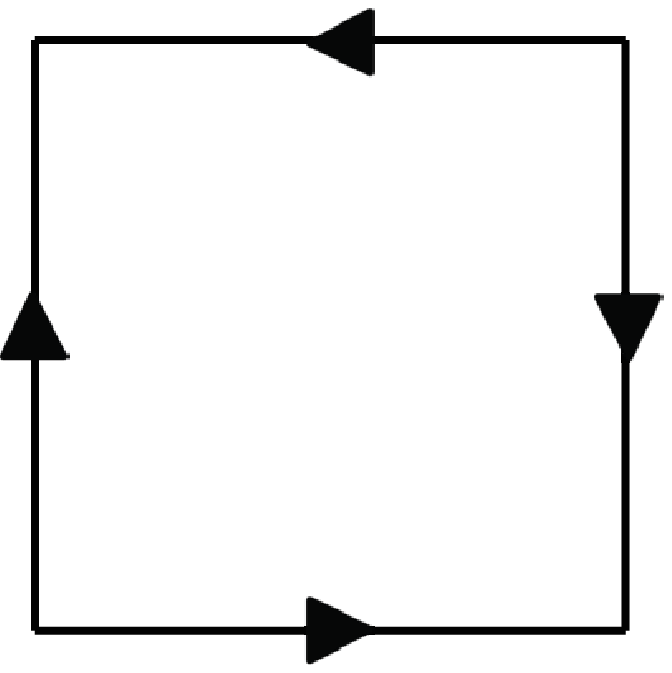}
\label{fig:2dimensionalnonUSO}
}
\subfigure[]{
\includegraphics[height=0.1\textheight  ]{./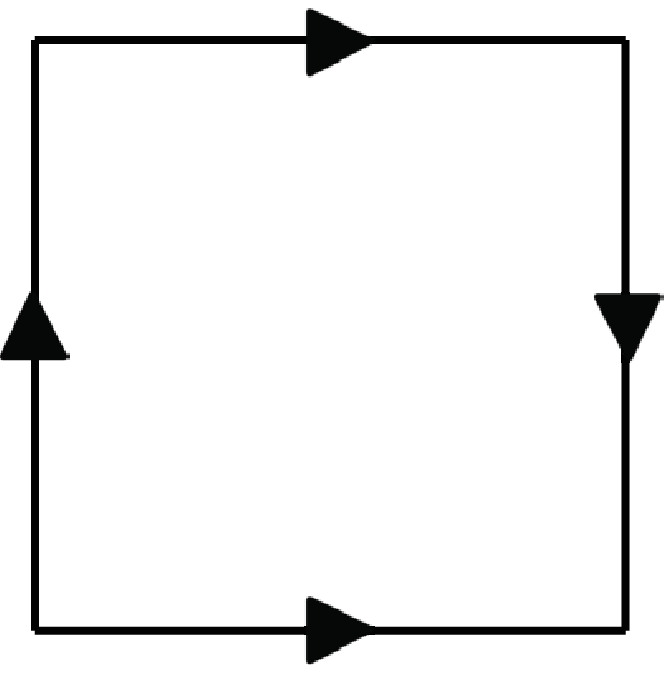}
\label{fig:2dimensionalUSO}
}
\subfigure[]{
\includegraphics[height=0.1\textheight ]{./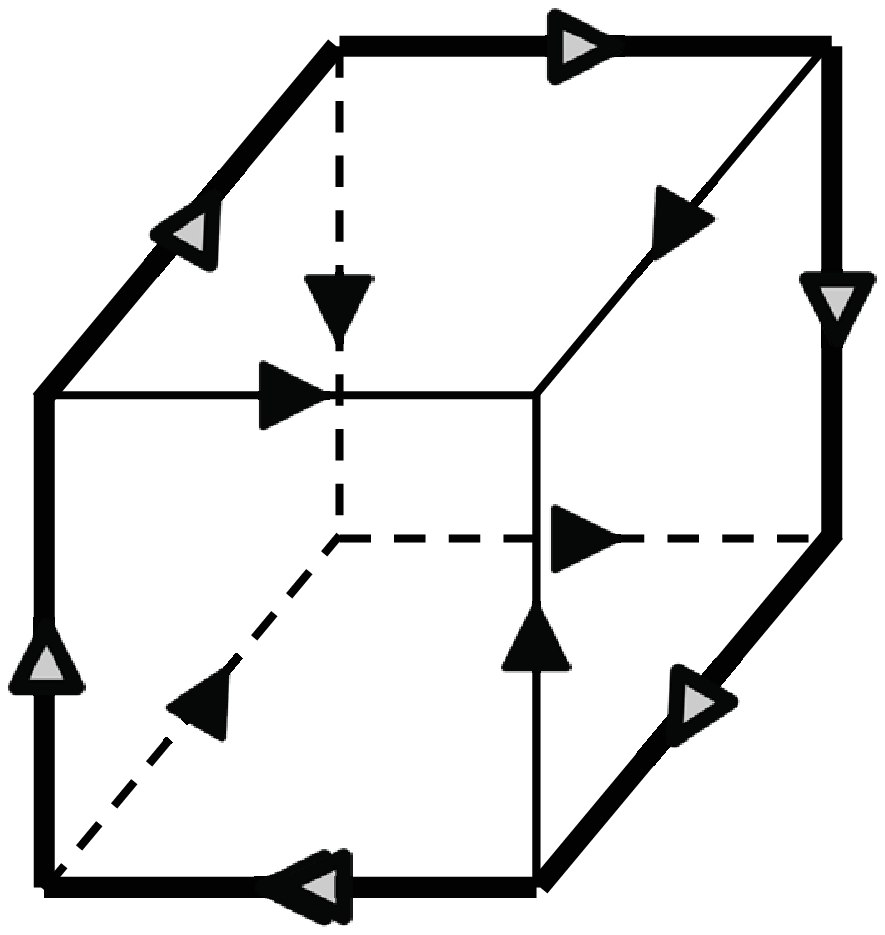}
\label{fig:3dimensionalUSO}
}
\subfigure[]{
\includegraphics[height=0.1\textheight  ]{./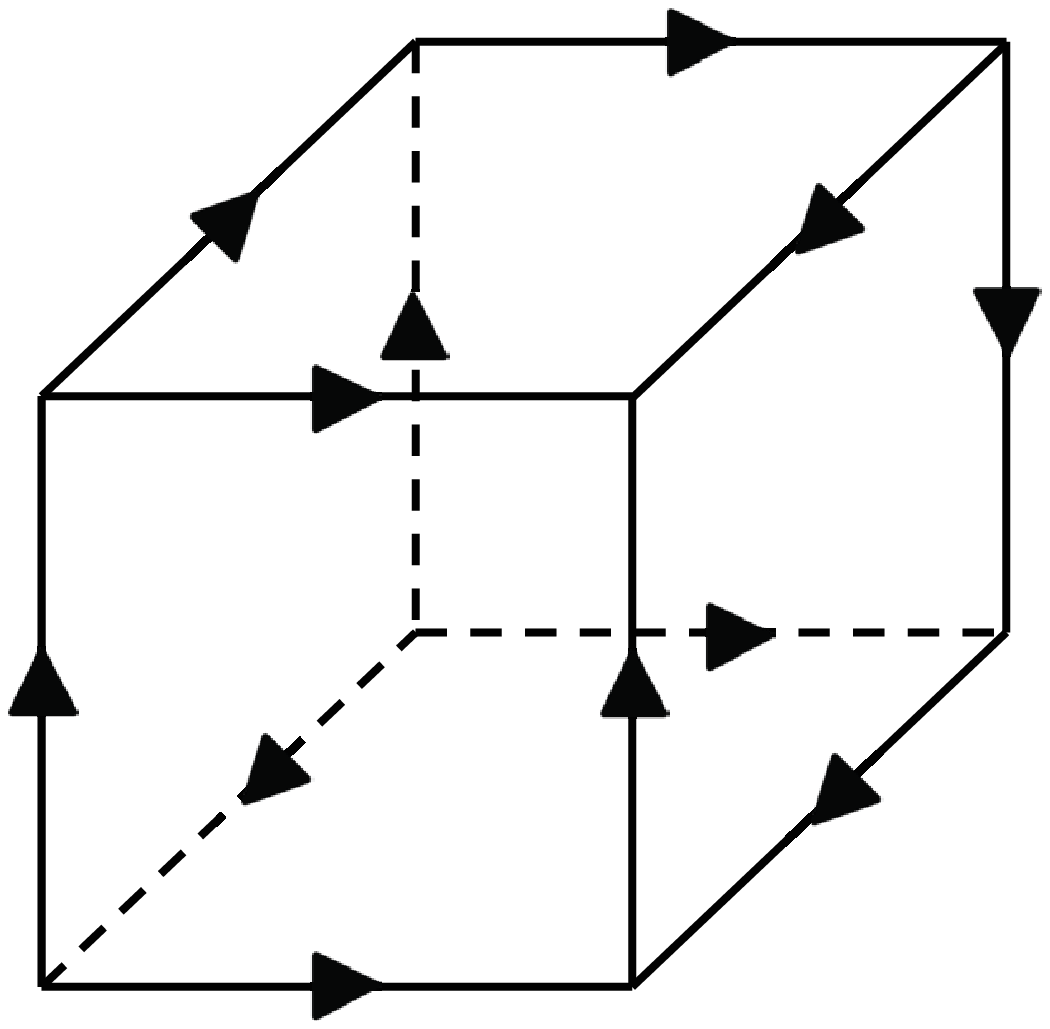}
\label{fig:3dimensionalAUSO}
}
\end{center}
\caption{ \ref{fig:2dimensionalnonUSO} is a non-USO, 
\ref{fig:2dimensionalUSO} and \ref{fig:3dimensionalAUSO} are AUSO cubes, 
\ref{fig:3dimensionalUSO} is a USO cube which has a cycle. }
\label{fig:AUSO}
\end{figure}

As noted in \cite{szabo2001unique},
various optimization problems can be solved using this model. The direction
on an edge of the hypercube corresponds to an increase in the value of
an abstract objective function defined on the vertices of the hypercube.
The concept of abstract objective functions was first introduced by Adler and Saigal
\cite{AS76}. The related concept of completely unimodal numberings was introduced by
Williamson Hoke \cite{hoke1988completely}. 
Although AUSOs need not correspond to actual polytopes and objective
functions, the notions of linear programming, such as bases and pivots, 
are readily available. Therefore we obtain an abstract model
on which to observe the behaviour of various history based pivot rules.

AUSOs on $d$-dimensional
hypercubes have a structure that makes for convenient notation and terminology.
Each vertex is labelled $0, ..., (2^d - 1)$ such that the binary 
representation of adjacent vertices' labels differ by exactly one bit.
Each edge has a \emph{direction} and an \emph{orientation}.
The direction is given by a number $1, ..., d$ indicating which 
bit is different between the two endpoints (counted right-to-left).
The orientation is given by a positive sign (+) if the differing 
bit is 0 at the edge's tail and 1 at its head, and it is 
given by a negative sign (-) otherwise.
We will use the terms {\em direction} to denote which bit is to be changed and
{\em signed direction} which also specifies the orientation.
For emphasis,
to specify the direction without sign we use the term {\em unsigned direction} (see Figure \ref{fig:dir}).
\begin{figure}[!ht]
\begin{center}
\subfigure[]{
\includegraphics[height=0.2\textheight ]{./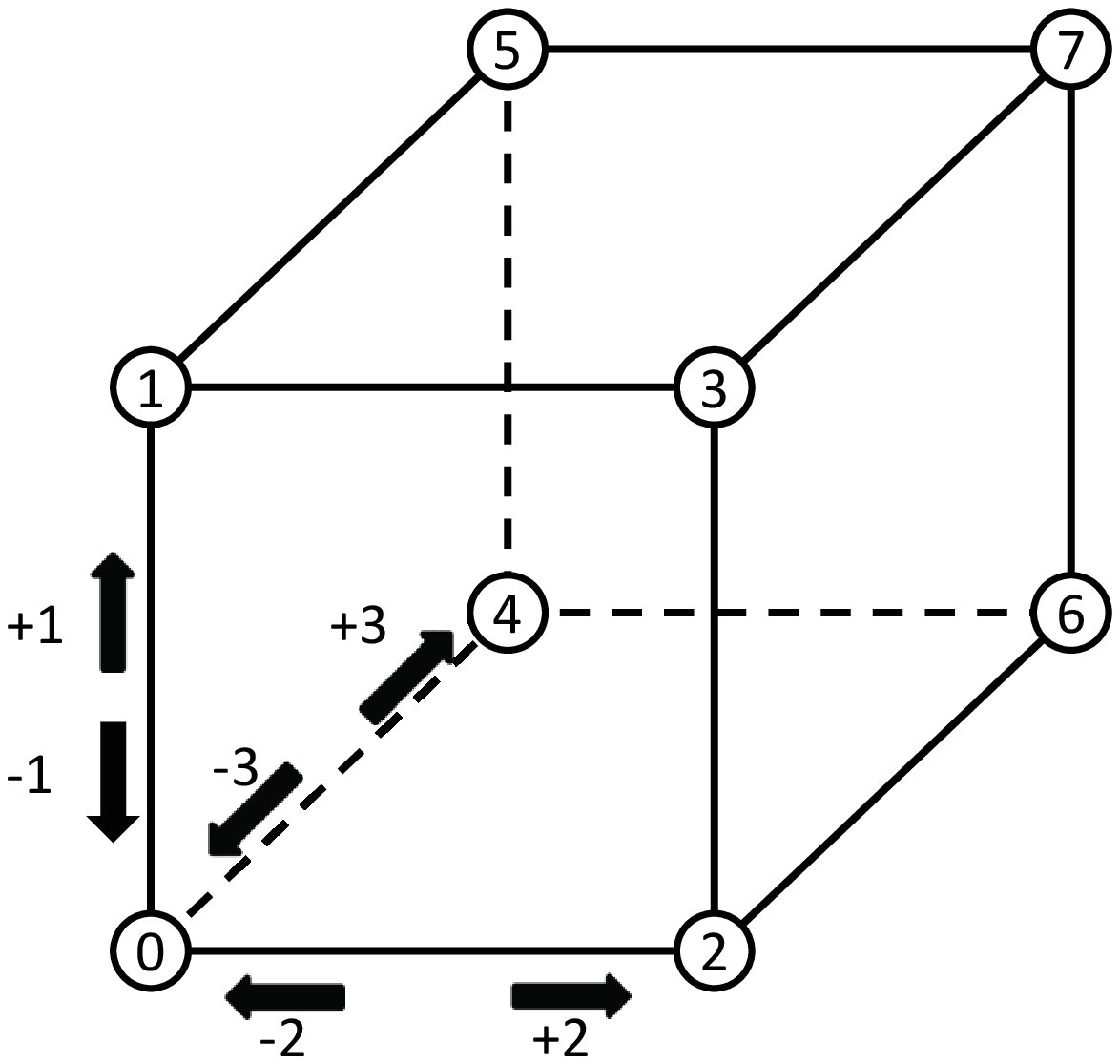}
\label{fig:signeddir}
}
\subfigure[]{
\includegraphics[height=0.2\textheight  ]{./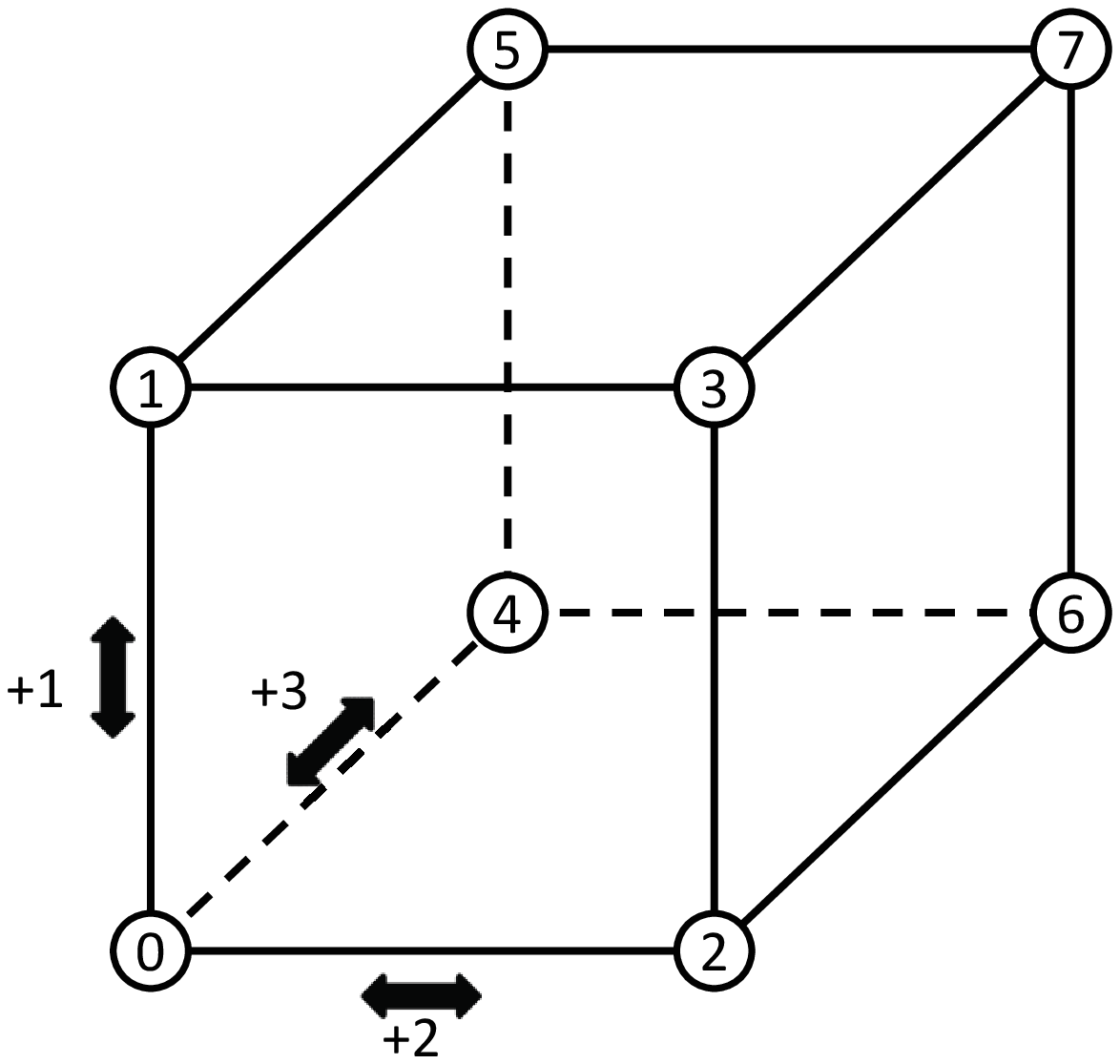}
\label{fig:usigneddir}
}
\end{center}
\caption{Signed and unsigned direction on a 3-dimensional cube}
\label{fig:dir}
\end{figure}
For some pivot rules only the unsigned direction is important 
whereas for others both the orientation and direction are important.

Although AUSOs do not necessarily correspond to LP digraphs, 
the above vertex labelling can be used to model moving 
along a path on an AUSO as pivoting in the dictionary 
$x_{d+i} = 1 - x_i$ for $i = 1, ..., d$.
A pivot $+i$ corresponds to a pivot where $x_i$ enters 
the basis and $x_{d+i}$ leaves, and a pivot $-i$ 
corresponds to a pivot where $x_{d+i}$ enters the basis and $x_i$ leaves.
This allows the AUSO to inherit various pivoting strategies 
that are defined in terms of LPs.

The Klee-Minty examples mentioned above can be modeled as AUSOs.
In fact, they show that Dantzig's original pivot rule for the simplex method leads to
a Hamiltonian path on an associated AUSO for each dimension $d$. 
As mentioned, similar results have been found for
other pivot rules. In this paper, we investigate whether history based pivot rules
can lead to Hamiltonian paths on AUSOs.

Our focus on Hamiltonian paths has the following motivations. Firstly, if such paths exist
for a given pivot rule they are obviously the worst case examples. Secondly,
the number of AUSOs is extremely large.
Stickney showed their are 19 in 3 dimensions \cite{stickney1978digraph}, 
Moriyama's program showed there are 12640 in 4 dimensions \cite{moriyama2006enumeration}, 
and Matou{\v{s}}ek \cite{Mat09} has shown that there are at least $2^{2^d}$ AUSOs in 
$d$-dimensions. So just listing
their degree sequences when $d=6$ requires at least $2^{70}$ steps.
Except for extremely low dimensions, it is therefore not possible to construct all acyclic USOs.
However searching for all acyclic USOs which contain a Hamiltonian path greatly reduces the search space.
This is due to a remarkable indegree characterization due to Williamson-Hoke 
discussed in Section \ref{section:searching}.
We are able to exploit this property 'on the fly' to eliminate early prefixes of Hamiltonian paths
that cannot be completed to an acyclic USO. This is because the final indegree of each 
vertex is known as soon as it enters the path. The enumeration enabled us to see that in fact
most rules do not follow Hamiltonian paths, a fact we were then able to prove.
Of course proving that a pivot rule cannot follow a Hamiltonian
path
does not say anything about the existence or not of other exponential length paths.
However searching for these is likely to be significantly more difficult.

The paper is structured as follows. In the next section we define various
history based pivot rules that have appeared in the literature: Zadeh's original rule, least-used direction rule, least-recently considered rule, least-recently basic rule, least-recently entered rule, and least iterations in the basis rule. We also give an example that shows they are all different.
In Section \ref{section:searching} we develop
an algorithm that generates all Hamiltonian paths, if any, followed by these
history based pivot rules. We also provide computational results that show
that most of these rules do not in fact produce Hamiltonian paths for dimensions up to $7$,
except in very low dimensions.
In Section \ref{section:none} we prove this fact holds for all higher dimensions for four of the history
based rules we have presented.
\section{History based pivot rules}
\label{section:history}

In this section we review a number of history based pivot rules that have appeared in the literature starting with Zadeh's original rule. We also present an example to show that the rules all behave
differently. The difference of these rules can be seen from the difference of the history array $h$. This array is indexed
by the $2d$ directions (or sometimes, all $d$ unsigned directions), and represents current historical information
required for the given rule.

Zadeh noticed that the Klee-Minty construction (see Figure \ref{fig:klee-minty})
greatly favours some directions over others,
and designed a new pivot rule to defeat this.

\vspace{1em}
\noindent
\hangindent2em
\hangafter=0
\emph{Zadeh's rule} (a.k.a. the \emph{least entered rule}) \cite{zadeh2009worst}: 
For the entering variable, select the improving variable that has entered the basis least often thus far. (Figure \ref{fig:zadehexample})
The history array $h$ is defined on all $2d$ directions, and $h(t)$ is the number of times the direction $t$ is used. 
\vspace{1em}

\begin{figure}[!ht]
\begin{minipage}[!ht]{\textwidth}
\begin{center}
\includegraphics[width=0.5\textwidth ]{./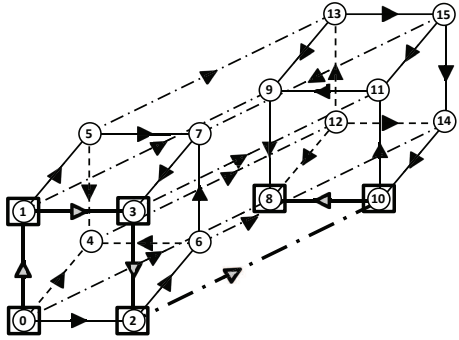}
\end{center}
\caption{Zadeh's rule}
\label{fig:zadehexample}
\end{minipage}
\def\@captype{table}
\begin{minipage}[!ht]{\textwidth}
\label{table:zadehexample}
\vspace{3em}
\begin{center}
\begin{tabular}{|c|c|c|c|c|c|c|c|c|c|}
\hline
\multicolumn{1}{|c|}{Vertex}&\multicolumn{8}{|c|}{direction}&{Outgoing direction} \\\cline{2-9}
(binary)& $+1$ & $-1$  & $+2$ & $-2$ & $+3$& $-3$ & $+4$     & $-4$  & (bold for chosen)\\\hline
$0(0000)$& $0$ & $0$  & $0$& $0$ & $0$& $0$ & $0$     & $0$  & $\boldsymbol{+1},+2,+3,+4$\\\hline
$1(0001)$& $1$ & $0$  & $0$& $0$ & $0$& $0$ & $0$     & $0$  & $\boldsymbol{+2},+3,+4$ \\\hline
$3(0011)$& $1$ & $0$  & $1$& $0$ & $0$& $0$ & $0$     & $0$  & $\boldsymbol{-1},+4$ \\\hline
$2(0010)$& $1$ & $1$  & $1$& $0$ & $0$& $0$ & $0$     & $0$  & $+3,\boldsymbol{+4}$ \\\hline
$10(1010)$& $1$ & $1$  & $1$& $0$ & $0$& $0$ & $1$     & $0$  & $+1,\boldsymbol{-2}$\\\hline
$8(1000)$& $1$ & $1$  & $1$& $1$ & $0$& $0$ & $1$     & $0$  & \\\hline
\end{tabular}
\end{center}
\end{minipage}
\end{figure}
In Zadeh's rule, as in others that we will study, there may be ties in selecting the entering variable.
We will assume that ties may be broken arbitrarily in this paper.
Note that Zadeh's rule chooses between all $2d$ variables ($d$ decision variables and $d$ slack variables) whereas the next history-based rule chooses between the $d$ pairs of decision and slack variables, $(x_i, x_{d+i})$, each of which defines a \emph{direction}.
Directions are not a very useful concept in arbitrary polytopes, as no two edges may be parallel, but they are a natural feature of hypercubes and are inherited by zonotopes,
which
are projections of hypercubes. They directly inherit the $d$ directions 
of the hypercube, some of which may no longer appear.

\vspace{1em}
\noindent
\hangindent2em
\hangafter=0
\emph{Least-used direction rule} (LUD) \cite{avis2009history}: 
For the entering variable, select the improving variable whose unsigned direction has been used least often thus far. (Figure \ref{fig:ludexample})
The history array $h$ is defined on all $d$ unsigned directions, and $h(t)$ is the number of times the direction $t$ is used. 
\vspace{1em}

\begin{figure}[!ht]
\begin{minipage}[!ht]{\textwidth}
\begin{center}
\includegraphics[width=0.5\textwidth ]{./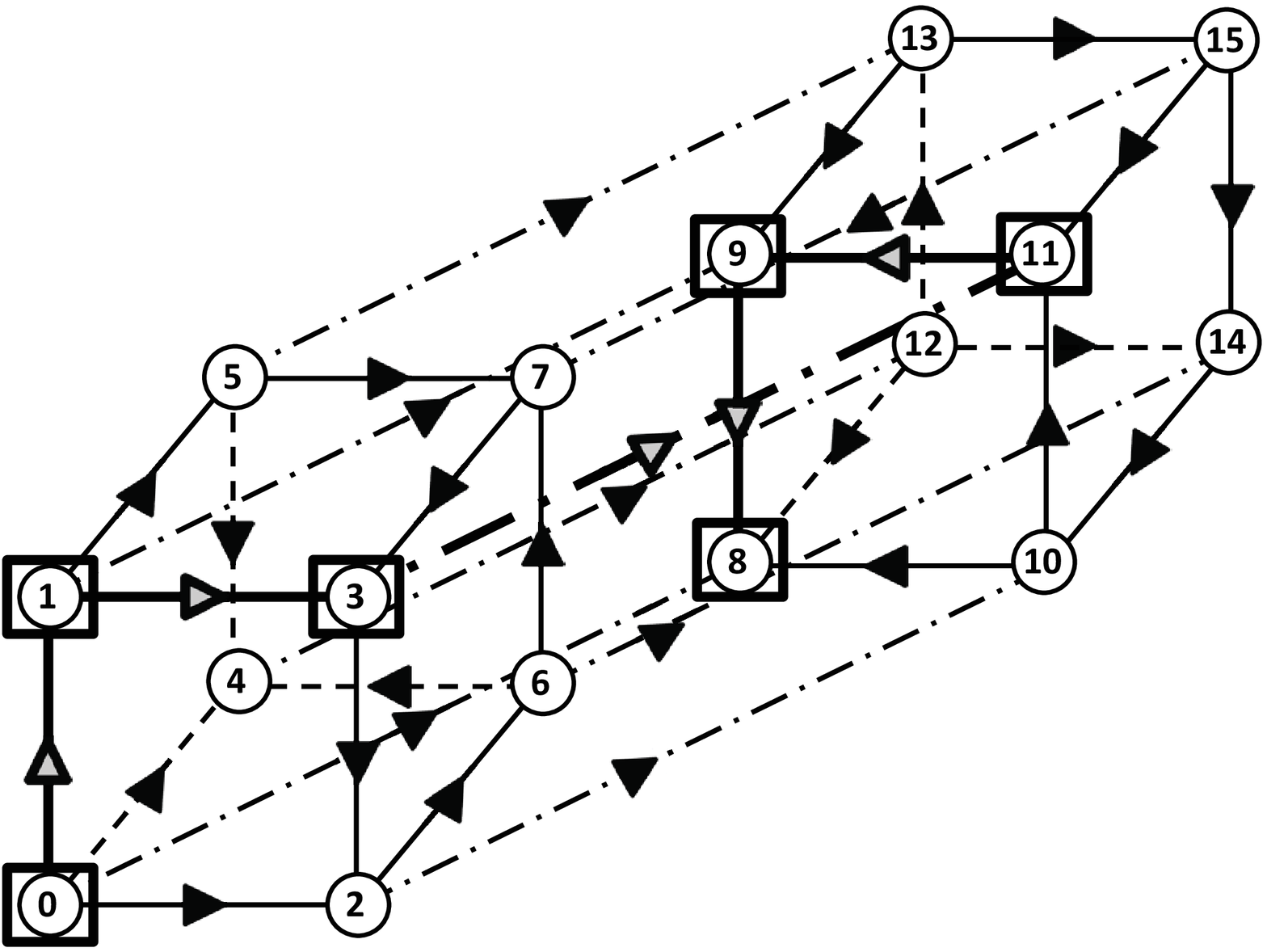}
\end{center}
\caption{Least Used Direction rule}
\label{fig:ludexample}
\end{minipage}
\def\@captype{table}
\begin{minipage}[!ht]{\textwidth}
\label{table:ludexample}
\vspace{3em}
\begin{center}
\begin{tabular}{|c|c|c|c|c|c|}
\hline
\multicolumn{1}{|c|}{Vertex}&\multicolumn{4}{|c|}{direction}&{Outgoing directions} \\\cline{2-5}
(binary)& $1$ & $2$& $3$ & $4$  & (bold for chosen)\\\hline
$0(0000)$& $0$ & $0$  & $0$& $0$ & $\boldsymbol{1},2,3,4$\\\hline
$1(0001)$& $1$ & $0$  & $0$& $0$ & $\boldsymbol{2},3,4$\\\hline
$3(0011)$& $1$ & $1$  & $0$& $0$ & $1,\boldsymbol{4}$\\\hline
$11(1011)$& $1$ & $1$  & $0$& $1$ & $\boldsymbol{2}$ \\\hline
$9(1001)$& $1$ & $2$  & $0$ & $1$ & $\boldsymbol{1}$\\\hline
$8(1000)$& $2$ & $2$  & $0$& $1$ &  \\\hline
\end{tabular}
\end{center}
\end{minipage}
\end{figure}

We now give some other history-based rules that have appeared in the literature.
We show the paths generated by these rules on the previous example in the Appendix.

\begin{itemize}
\item
\emph{Least-recently considered rule} \cite{cunningham1979theoretical}:
Fix an ordering of the variables $v_1, v_2, ...,$\\
$v_{2d}$ and let the previous entering variable be $v_i$.
For the entering variable, select the improving variable that first appears in the sequence $v_{i+1}, v_{i+2}, ...,
v_{2d}, v_1, ..., v_{i-1}$ (or $v_1, ..., v_{2d}$ if this is the first pivot).
The history array $h$ is defined on all $2d$ directions and is initialized by setting
$h(t)$ to be the rank of $t$ in the given fixed ordering. 
If direction $s$ is chosen the array is updated as 
$h(t) \leftarrow (h(t)-h(s)-1\mbox{ mod }2d) +1$. 
The Appendix shows the example of the case when initial sequence is $\{+2, -4, +1, -3, -2, +3, -1, +4\}$
(Figure \ref{fig:lrcexample})

\item
\emph{Least-recently basic rule} [Johnson in \cite{cunningham1979theoretical}]:
For the entering variable, select the improving variable that left the basis least-recently.
The history array $h$ is defined on all $2d$ directions: $h(t)$ is the step number the $|t|$-th bit of 
the vertex was last $1$ if $t$ is positive or was last $0$ if $t$ is negative.
(Figure \ref{fig:lrbexample})

\item
\emph{Least-recently entered rule} (a.k.a. \emph{least-recently used}) \cite{fathi1986affirmative}:
For the entering variable, select the improving variable that entered the basis least-recently thus far.
The history array $h$ is defined  on all $2d$ directions: $h(t)$ is the step number when the $|t|$-th 
bit of the vertex last changes from $0$ to $1$ if $t$ is positive or from $1$ to $0$
if $t$ is negative.
(Figure \ref{fig:lreexample})

\item
\emph{Least iterations in the basis rule} \cite{avis2009history}:
For the entering variable, select the improving variable that has been in the basis for the least number of iterations.
(Figure \ref{fig:lebexample})
The history array $h$ is defined on all $2d$ directions, and is the number 
of times the $|t|$-th bit of the vertex is $1$ if $t$ is positive or $0$ if $t$ is negative.
\end{itemize}

Note that all the examples illustrate distinct paths on the same AUSO cube.

In the following section we will describe an algorithm 
to determine if there are any AUSOs that admit Hamiltonian paths for the
history based methods described in this section.

\section{Searching for Hamiltonian paths on AUSOs that follow history based pivot rules}
\label{section:searching}
We developed an algorithm for determining if the various history
based rules can be made to follow a Hamiltonian path on an AUSO.
As noted in the introduction, it is known that the number of AUSOs is a doubly exponential,
so a direct search quickly becomes infeasible.
We use the fact that we are looking for AUSOs with Hamiltonian paths,
which greatly reduces the search space.
\subsection{Preliminaries}
Our basic approach is to generate Hamiltonian paths starting with an unoriented hypercube,
rather than first orienting the cube and checking if it is Hamiltonian.
Suppose that a cube has a Hamiltonian path labelled with vertices
$v_1, ... , v_N$. Then acyclicity implies immediately that each edge of the hypercube
$v_i v_j$ with $i<j$ must be directed from $v_i$ to $v_j$.
Therefore, given a Hamiltonian path on the cube, we can 
easily construct the unique acyclic orientations for all edges of the cube. 
It still remains to test whether this orientation is an AUSO. 
Fortunately there is an efficient way to
do this based on
Williamson Hoke's theorem \cite{hoke1988completely}:
\newtheorem{Hoke}{Theorem}
\begin{thm}
\label{thm:Hoke}
If an orientation on a $d$-dimensional cube is acyclic, the following conditions are equivalent.
\begin{itemize}
\item The orientation is a unique sink orientation.
\item For $k=0,...,d$ there are exactly
$d \choose k$ vertices with indegree $k$ 
(and hence $d \choose k$ vertices with outdegree $k$).
\end{itemize}
\end{thm}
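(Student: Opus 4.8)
The plan is to prove both implications by a single double-counting of the incidences between vertices and the faces of which they are a sink, built on one combinatorial observation: a vertex $v$ of indegree $k$ is the sink of exactly $\binom{k}{j}$ faces of dimension $j$. To see this, note that a $j$-dimensional face containing $v$ is specified by choosing $j$ of the $d$ coordinate directions to vary while fixing $v$'s remaining $d-j$ coordinates, and $v$ is a sink of that face exactly when all $j$ chosen directions are incoming at $v$; there are $\binom{k}{j}$ such choices. Summing over $j$ shows $v$ is the sink of $\sum_j \binom{k}{j} = 2^{k}$ faces in total, a fact I will use for the reverse direction. Throughout, let $a_k$ denote the number of vertices of indegree $k$, and recall that the cube has $\binom{d}{j}2^{d-j}$ faces of dimension $j$, hence $\sum_j \binom{d}{j}2^{d-j} = 3^d$ faces in all.

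For the forward direction I would assume the orientation is a unique sink orientation, so that every $j$-face has exactly one sink. Double-counting (sink, $j$-face) incidences then gives, for each $j=0,\dots,d$,
\[
\sum_{k=j}^{d} a_k \binom{k}{j} = \binom{d}{j}2^{d-j}.
\]
Viewed in the unknowns $a_d,a_{d-1},\dots,a_0$ this is triangular with unit diagonal entries $\binom{k}{k}=1$, so it has a unique solution. Because the identity $\sum_k \binom{d}{k}\binom{k}{j}=\binom{d}{j}2^{d-j}$ (from $\binom{d}{k}\binom{k}{j}=\binom{d}{j}\binom{d-j}{k-j}$ and the binomial theorem) verifies that $a_k=\binom{d}{k}$ satisfies every equation, this is that unique solution, establishing the indegree condition.

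For the reverse direction I would assume $a_k=\binom{d}{k}$ and that the orientation is acyclic. Every face carries an induced orientation that is again a finite acyclic digraph, hence has at least one sink and at least one source. Counting (sink, face) incidences over all faces and using the observation above gives a total of $\sum_v 2^{\operatorname{indeg}(v)} = \sum_k \binom{d}{k}2^{k} = 3^d$, exactly the number of faces. Since each of these $3^d$ faces already contributes at least one sink, every face has precisely one sink. Running the same argument with outdegrees—the hypothesis equally forces $\binom{d}{k}$ vertices of outdegree $k$—shows every face has a unique source as well, so the orientation is a unique sink orientation.

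I expect the real content to sit in the reverse direction, and precisely in the use of acyclicity. The incidence identity $\sum_v 2^{\operatorname{indeg}(v)}=3^d$ by itself only fixes the average number of sinks per face at one; it is acyclicity, guaranteeing that each subcube has at least one sink and one source, that converts this average into an exact count. Without it the equality could persist while a directed $4$-cycle bounding some $2$-face leaves that face sink-free and forces a surplus elsewhere, and the theorem would break. By comparison the forward direction is bookkeeping once one notices the triangular shape of the incidence equations.
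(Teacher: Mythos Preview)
Your argument is correct in both directions. The key observation that a vertex of indegree $k$ is the local sink of exactly $\binom{k}{j}$ of the $j$-faces through it, together with acyclicity guaranteeing at least one sink per face, does exactly what is needed; the triangular system for the forward direction and the $3^d$ global count for the reverse direction are clean and complete.

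There is nothing to compare against, however: the paper does not prove this theorem. It is quoted as Williamson Hoke's theorem with a citation to \cite{hoke1988completely} and then used as a tool (for the degree-sequence test and for Corollary~\ref{col:indeg}). Your double-counting proof is in fact the standard one from that source, so had the authors included a proof it would almost certainly have matched yours. One small remark: for the purposes of this paper the statement is applied only in the reverse direction (Hamiltonian path $\Rightarrow$ acyclic orientation $\Rightarrow$ check the indegree multiset), so your emphasis on acyclicity carrying the real weight there is well placed.
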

This makes it very easy to check if a given Hamiltonian path appears in an AUSO cube:
we need only test the degree sequence.
Furthermore, we can even use this test as the Hamiltonian path is being constructed.
Note that when a vertex is added to the path its indegree and out-degree are known.
Also partial degree information is known for unexplored vertices.
Therefore if Williamson Hoke's condition is violated, we need not complete the
construction of the given path. This leads to an efficient pruning technique.
We also have the following interesting corollary.
\begin{col}
\label{col:indeg}
In a Hamiltonian path on a AUSO $d$-cube starting from vertex $0$, 
the indegree of a vertex is $1$ if and only it is reached by a positive direction (or, unsigned direction)
that is being used for the first time.
\end{col}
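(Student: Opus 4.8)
The plan is to prove the two implications asymmetrically: the ``if'' direction follows from a direct observation about bit values along the path, while the ``only if'' direction comes almost for free from a counting argument that invokes Theorem~\ref{thm:Hoke}. Throughout, write the path as $v_1, v_2, \ldots, v_{2^d}$ with $v_1 = 0$, and recall that acyclicity forces every edge $v_iv_j$ with $i<j$ to be oriented from $v_i$ to $v_j$; hence the indegree of $v_k$ is precisely the number of neighbours of $v_k$ that occur before it in the path. Since $v_{k-1}$ is always such a neighbour, every vertex other than $v_1$ has indegree at least $1$, and the content of the corollary is to characterise exactly when this bound is tight.

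For the ``if'' direction, I would suppose $v_k$ is reached by an (unsigned) direction $i$ that is used for the first time. Because the path starts at $0$ and bit $i$ has never been toggled before step $k$, bit $i$ equals $0$ at every vertex $v_1, \ldots, v_{k-1}$ and equals $1$ at $v_k$; in particular this first use is necessarily positive, which already shows the two phrasings (``positive direction'' and ``unsigned direction'') coincide. Now consider any neighbour $u \ne v_{k-1}$ of $v_k$. Such a $u$ differs from $v_k$ in some bit $i' \ne i$, so $u$ still has bit $i$ equal to $1$ and therefore cannot coincide with any of $v_1, \ldots, v_{k-1}$. Hence $u$ occurs strictly after $v_k$, the edge $v_ku$ points away from $v_k$, and so the only in-edge of $v_k$ is the one from $v_{k-1}$. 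Thus $v_k$ has indegree exactly $1$.

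The ``only if'' direction I would then deduce by counting rather than directly. Each of the $d$ directions is used at least once along a Hamiltonian path, so each has a well-defined first use, reaching $d$ pairwise distinct vertices $w_1, \ldots, w_d$ (distinct because every vertex has a unique incoming path edge, carrying a unique direction). By the previous paragraph every $w_i$ has indegree $1$. On the other hand, since the orientation is an AUSO, Theorem~\ref{thm:Hoke} guarantees exactly $\binom{d}{1}=d$ vertices of indegree $1$. As $w_1, \ldots, w_d$ are $d$ distinct indegree-$1$ vertices, they must be all of them; hence any vertex of indegree $1$ is some $w_i$, i.e.\ is reached by a first-use (positive) direction. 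This closes the equivalence.

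I expect the only real subtlety to be the reverse implication, which is awkward to establish directly: a vertex reached by a repeated or negative direction does not obviously acquire a second in-neighbour without tracking the parity of bit toggles. The counting argument circumvents this entirely by using the exact indegree distribution supplied by Theorem~\ref{thm:Hoke}, which is precisely why the statement is naturally a corollary of that theorem. The two small points to verify carefully are the distinctness of the $w_i$ and the fact that a first use of a direction is always positive, both of which are immediate consequences of starting the path at vertex $0$.
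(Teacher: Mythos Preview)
Your proposal is correct and follows essentially the same approach as the paper's own proof: the ``if'' direction is established by observing that all earlier vertices have bit $t$ equal to $0$ while every neighbour of $v_k$ other than $v_{k-1}$ has bit $t$ equal to $1$, and the ``only if'' direction is obtained by counting the $d$ first-use vertices against the $\binom{d}{1}$ indegree-$1$ vertices guaranteed by Theorem~\ref{thm:Hoke}. Your write-up is more explicit on the minor points (distinctness of the $w_i$, positivity of a first use), but the structure of the argument is the same.
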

\begin{proof}
Suppose $(v,v')$ is an edge on the Hamiltonian path that uses the direction $+t$ for the first time. 
Then all previous vertices in the path must have zero on the $t$-th bit.
However all neighbours of $v'$ on the hypercube except $v$ have one on this bit, so they
cannot have been visited yet. Therefore the indegree of $v'$ is one.
Since there are $d$ directions, this yields $d$ vertices on the path with indegree one.
By Williamson Hoke's theorem this is the entire set of such vertices.
\end{proof}

As mentioned in the introduction, Zadeh's rule encourages each variable to be used
as a pivot variable roughly the same number of times. We make this precise
in the following result.

\begin{thm}
\label{thm:lowerBound}
Assume that there is a Hamiltonian path $P$ that follows Zadeh's rule on
an AUSO of an $n$-cube.
The least-used signed direction is used
at least $\frac{2^{n-2}}{n} -\frac{3}{2}$ times.
\end{thm}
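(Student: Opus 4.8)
The plan is to count, in two different ways, the pivots that use a direction which has so far been entered only a few times, exploiting the fact that Zadeh's rule always selects such a direction whenever it is available as an improving move. Throughout, the path $P=(v_1,\dots,v_{2^n})$ has exactly $2^n-1$ pivots, and let $c_\sigma$ denote the final number of times signed direction $\sigma$ is used. First I would record two elementary structural facts. (i) Since $P$ follows the (acyclic) orientation, every edge incident to the last vertex $v_{2^n}$ is directed inward, so $v_{2^n}$ is the global sink and is the only vertex with no outgoing pivot on $P$. (ii) For each unsigned direction $t$ the $t$-th bit toggles along $P$, and the number of $+t$ moves minus the number of $-t$ moves equals the change of the $t$-th bit from $v_1$ to $v_{2^n}$; hence $|c_{+t}-c_{-t}|\le 1$. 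Let $s$ be the least-used signed direction, with $c_s=m$, and put $t=|s|$; by (ii) both $c_{+t}$ and $c_{-t}$ are at most $m+1$.

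The heart of the argument is a count of the vertices at which direction $t$ is an improving (outgoing) move. The $2^{n-1}$ edges in direction $t$ form a perfect matching, so exactly $2^{n-1}$ vertices are tails of $t$-edges, i.e.\ have direction $t$ outgoing. By fact (i) the sink has direction $t$ incoming, so none of these $2^{n-1}$ tails is the sink, and each therefore carries a genuine Zadeh pivot. At such a tail $v$ the improving $t$-move (either $+t$ or $-t$) is available, so the direction $d$ actually chosen satisfies $h(d)\le h(s'_v)$, where $s'_v\in\{+t,-t\}$ is the available $t$-move. Since $h(s'_v)\le c_{s'_v}\le m+1$, each of these $2^{n-1}$ distinct pivots is an occurrence in which the chosen direction had been used at most $m+1$ times before, i.e.\ it is among the first $m+2$ uses of that direction.

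Now I would bound the same quantity globally from above. Each of the $2n$ signed directions can supply at most $m+2$ pivots that lie among its first $m+2$ uses, so the total number of pivots whose chosen direction had prior count at most $m+1$ is at most $2n(m+2)$. Comparing with the lower bound of the previous paragraph gives $2^{n-1}\le 2n(m+2)$, whence $m\ge \tfrac{2^{n-2}}{n}-2$, which already captures the asserted main term $\tfrac{2^{n-2}}{n}$.

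The remaining, and only delicate, point is to sharpen the additive constant from $-2$ to $-\tfrac32$. Here I would split the $2^{n-1}$ tails by the value of bit $t$: at the tails with bit $t$ equal to $0$ the available move is $+t$, and at those with bit $t$ equal to $1$ it is $-t$. Because the balanced pair $\{c_{+t},c_{-t}\}$ realizes the minimum $m$ in at least one sign, a large share of the tails forces the stronger bound ``prior count at most $m$'' (first $m+1$ uses) rather than $m+1$; feeding this refined split, together with the exact degree-sequence constraints of Williamson--Hoke's theorem (Theorem \ref{thm:Hoke}), into the global count yields the stated $\tfrac{2^{n-2}}{n}-\tfrac32$. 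I expect this constant-tightening to be the main obstacle: the matching-plus-pigeonhole core is robust, but controlling exactly how the two orientations of direction $t$ distribute over the matching (equivalently, how unbalanced the counts of $t$-edges of each sign can be) is precisely what separates the easy $-2$ from the claimed $-\tfrac32$.
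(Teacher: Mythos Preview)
Your approach is essentially the paper's own: both arguments single out the unsigned direction $t$ carrying the minimum, look at the $2^{n-1}$ vertices where the $t$-move is available (your ``tails'', the paper's ``unblocked'' vertices), observe that Zadeh's rule forces the chosen direction there to have small history, and pigeonhole against the $2n$ signed directions. Your rigorous part is correct and yields the constant $-2$; the paper obtains $-3/2$ directly via the slightly sharper count $n(k{+}2)+n(k{+}1)-1$ for the unblocked pivots.

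Your plan for closing the gap, however, misidentifies where the saving comes from. Williamson--Hoke's degree condition plays no role in this theorem; the paper never invokes it here. The extra half comes purely from applying your own toggling observation (ii) \emph{pointwise along the path} rather than only to the final counts. At any vertex with bit $t$ equal to $0$ one has $h(+t)-h(-t)=0$, so at every tail in $C_1$ the available move $+t$ satisfies $h(+t)=h(-t)\le c_{-t}=m$, not merely $\le m{+}1$. Combined with the obvious $h(-t)\le m$ at tails in $C_2$, \emph{every} tail pivot has $h(d)\le m$, i.e.\ lies among the first $m{+}1$ uses of its direction. Your own inequality then reads $2^{n-1}\le 2n(m{+}1)$, which already gives $m\ge \tfrac{2^{n-2}}{n}-1$, stronger than the stated $-\tfrac32$. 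So the ``delicate'' refinement you anticipated is neither delicate nor dependent on degree sequences: drop the appeal to Theorem~\ref{thm:Hoke} and simply sharpen the bound on $h(+t)$ at $C_1$-tails using the bit-toggling identity.
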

\begin{proof}
We may assume that $P$ starts at vertex zero.
Let $-t$ to be the least-used direction, and $k$ be the number of times signed direction $-t$ is used.
Partition the $d$-dimensional AUSO into two $(d-1)$-dimensional hypercubes 
$C_1$ and $C_2$ where the direction $t$ separates the two (see Figure \ref{fig:twocubes})
and $P$ starts in $C_1$.

\vspace{3em}
 \begin{figure}[ht]
  \begin{center}
  \includegraphics[width=0.5\textwidth]{./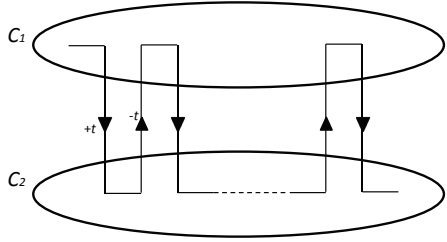}
\caption{The two $(d-1)$-dimensional cubes $C_1$ and $C_2$ separated by the direction $t$.}
\label{fig:twocubes}
\end{center}
\end{figure}

   \noindent 
Let $m_i, i=1,...,n$ be the number of times that signed direction
$+i$ is used in $P$ and $m_{n+i}, i=1,...,n$ be the number of times that signed direction
$-i$ is used in $P$.
Since all $2^n$ vertices are visited, we have
\[
\sum_{i=1}^{2n} m_i = 2^n - 1
\]
We know that the minimum value $m_{n+t}=k$ and $m_t = k+1$. 
We can estimate the sum in another way by computing $m_i$ as $P$ is followed.
Suppose we are at vertex $v$ in $C_1$ and follow a signed direction $+i$
with $m_i \geq k+2$. 
The signed direction $+t$ would have been a 
preferred choice since $m_t \leq k+1$. 
If signed direction $+t$ was not taken, then it must be that its
neighbour in $C_2$ was already visited. We call this a blocked pair.
A similar analysis holds if $v$ is in $C_2$ and a signed direction $-i$ 
is chosen with $m_{n+i} \geq k+1$.
There can be at most $2^{n-1}$ blocked pairs.
So in computing the sum of the $m_i$ along $P$ we have at most a contribution of 
$n(k+2) +n(k+1) -1$ for the unblocked pivots and a contribution of at most
$2^{n-1}$ for the blocked pivots.
Therefore
\[
\sum_{i=1}^{2n} m_i \leq n(2k+3)-1 + 2^{n-1}.
\]
Combining the two expressions for the sum, the theorem follows.
\end{proof}

Unfortunately Theorem \ref{thm:lowerBound} only holds when the path is Hamiltonian.
It is possible for a non-Hamiltonian exponential length path to use a 
signed direction as few as zero times! An example is shown in
Figure \ref{fig:nonhamiltonian}.
Here we assume that $C_1$ and $C_2$ are copies of an AUSO cube with a long path.
The resulting cube $C$ is easily seen to be an AUSO.
Note that since the path is non-Hamiltonian in $C$, vertices unvisited by the path in $C_1$ may
be directed into $C_2$.
\begin{figure}[ht]
  \begin{center}]
  \includegraphics[width=0.5\textwidth]{./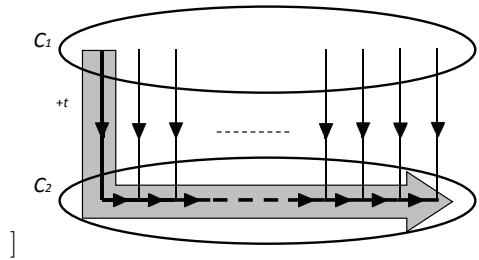}
\end{center}
\caption{An example of a non-Hamiltonian exponential path where one signed direction is never used}
\label{fig:nonhamiltonian}
\end{figure}
\subsection{The algorithm and its validity}
In this subsection we describe an algorithm that can generate, up to equivalence, all
Hamiltonian paths on AUSOs using any of the history based pivot rules described in Section \ref{section:history}. 
In this paper, when we say two paths are equivalent, it means they are equivalent up to permutation. In other words, when two paths P and Q are equivalent, there is a permutation of cordinates $f: \{1,2,\cdots,d\} \rightarrow \{1,2,\cdots,d\}$ such that $f(P)=Q$. 
Algorithm \ref{algo:search} gives the pseudocode of our algorithm.
We assume that the
Hamiltonian path starts from the vertex labelled $0$.
We denote the indegree of the vertex $x$ by $indeg(x)$.
\begin{algorithm}[ht]
\caption{Enumerate HP on AUSO-cube with history based pivot rule}
\label{algo:search}
\begin{algorithmic}[1]
\STATE $path \leftarrow \{0\}$.
\IF{current path is Hamiltonian path}
\IF{$path$ is USO}
\STATE{output the result} 
\ENDIF
\ELSE
\STATE $m \leftarrow \min_{t\in\{\pm 1,..., \pm d\}, t\mbox{ is feasible}}\{ h(t)\; | move(path.end,t) \mbox{ is not visited }\}$
\FORALL{$t$ such that $h(t) = m$}
\IF{$h(t)=0$ and  $\exists t'<t\; s.t.\; h(t') = 0$}
\STATE{continue}
\ELSE
\STATE{$v \leftarrow move(path.end,t)$}
\IF{$h(t)\neq 0$ and  $indeg(v) = 0$}
\STATE{continue}
\ENDIF
\STATE{$path \leftarrow path + v$}
\STATE{renew $h$}
\STATE{continue searching (from line $2$)}
\STATE{recover $h$}
\STATE{delete $path.end$}
\ENDIF
\ENDFOR
\ENDIF
\end{algorithmic}
\end{algorithm}
For $t=\pm 1,...,\pm d$ the function $move( x,t )$ returns the neighbour of $x$ using the signed direction $t$, that is, the vertex $x+sign(t)2^{|t|-1}$. 
Note that we focus on this function only when $t$ is a feasible move.
The array $h$ denotes the history information of the path and depends on the pivoting rule. 
For example, in the case of Zadeh's rule, $h(t)$ is the number of times the signed direction $t$ is taken.
We claim that the algorithm outputs, up to equivalence, all required Hamiltonian paths and that there are
no duplications.
First of all
we show that each of the required Hamiltonian paths are equivalent to one of the paths output by the program.
Below, by  `history based pivot rule' we refer to any of the rules described in Section \ref{section:history}.
\begin{lem}
\label{lem:completeness}
For every Hamiltonian path $P$ on a $d$-cube which can be followed by a history based pivot rule, 
there is a labelling of the cube such that $P$ begins with vertex $0$ and 
the order of positive directions first used in $P$ is $\{1,2,\dots,d-1,d\}$.
\end{lem}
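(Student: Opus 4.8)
The plan is to exhibit the required labelling as a composition of two hypercube automorphisms — a bit-flip (translation in $\mathbb{F}_2^d$) that moves the start of $P$ to the all-zero vertex, followed by a coordinate permutation that forces the first-use order of the directions to be $1,2,\dots,d$ — and to verify that neither operation destroys the property of being followable by the given history based rule.

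First I would pin down the start of the path. Writing $P = v_1,\dots,v_N$, note that since the underlying orientation is acyclic and $P$ is a directed Hamiltonian path, every cube edge joining $v_i$ and $v_j$ with $i<j$ must be oriented from $v_i$ to $v_j$: the opposite orientation would close a directed cycle with the path segment $v_i \to \dots \to v_j$. In particular $v_1$ has indegree $0$ and is the global source. Relabelling the vertices by XOR-ing every label with the label of $v_1$ is an automorphism $\tau$ of the cube that sends $v_1$ to the all-zero vertex $0$; it carries the AUSO to an AUSO and $P$ to a directed Hamiltonian path $\tau(P)$ beginning at $0$. Working now in the relabelled cube, every bit starts at $0$, so the first time an unsigned direction $t$ is used it necessarily flips bit $t$ from $0$ to $1$, i.e. it is used as the \emph{positive} direction $+t$ — exactly the indegree-one event of Corollary \ref{col:indeg}. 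Letting $a_1,\dots,a_d$ be the order in which the $d$ unsigned directions first appear along $\tau(P)$ (a permutation of $\{1,\dots,d\}$) and $\rho$ the coordinate permutation with $a_j \mapsto j$, the path $\rho(\tau(P))$ still starts at $0$ (coordinate permutations fix the all-zero vertex) and now first uses the positive directions in the order $1,2,\dots,d$, since $\rho$ only renames directions without changing their temporal order.

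The step I expect to be the real content, as opposed to the two bookkeeping relabellings above, is verifying that both $\tau$ and $\rho$ preserve followability by the history based rule; otherwise the relabelled path would be useless for the completeness argument this lemma is meant to support. The key point is that every rule of Section \ref{section:history} is \emph{equivariant} under the automorphism group of the cube. An automorphism $\phi$ induces a fixed bijection $\pi$ on the $2d$ signed directions — a bit-flip in coordinate $i$ simply interchanges $+i$ and $-i$, while a coordinate permutation renames indices — and under it the history array transforms by $h'(\pi(t)) = h(t)$ at every step, as does the set of feasible improving directions at the current vertex. Consequently the selection the rule makes along $P$ (a minimiser of counts for Zadeh or LUD, a most-recent or least-iterations choice for the others) corresponds under $\pi$ to a legal selection along $\phi(P)$; the only possible discrepancy is in how ties are resolved, which is harmless because ties are broken arbitrarily in this paper.

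I would present this equivariance once in full generality and then observe that it specializes to each named rule, paying particular attention to the bit-flip case, where the swap $+i \leftrightarrow -i$ must be tracked explicitly in the count- and timestamp-based definitions (the least-recently considered rule needs the obvious remark that its fixed variable ordering is relabelled along with the directions). Combining the two relabellings, $\rho\circ\tau$ is the desired labelling: it makes $P$ begin at $0$, forces the first-used positive directions into the order $1,\dots,d$, and, being an automorphism under which the rule is invariant, keeps $P$ followable by that rule.
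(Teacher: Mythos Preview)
Your proposal is correct and follows the same relabelling idea as the paper, which in a single short paragraph simply labels the starting vertex $0$ and names each new direction $+i$ in the order it first appears along $P$. You go further than the paper by explicitly decomposing the relabelling as a bit-flip followed by a coordinate permutation and by verifying equivariance of the history rules under these cube automorphisms --- a point the paper's terse proof leaves entirely implicit but which is indeed what makes the lemma useful for the completeness of Algorithm~1.
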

\begin{proof}[Proof of Lemma \ref{lem:completeness}]
We show how to embed $P$ on a $d$-cube so that it has the required properties.
We label the first vertex in $P$ as $0$ and the initial edge of $P$ as direction $+1$.
Continuing, for $i=2,...,d$ we consider the first edge of $P$
that leaves a face of the cube of dimension $i-1$.
We define the direction used by this edge as $+i$.
This induces a labelling of the cube with the desired properties.
\end{proof}
We remark that all paths produced by Algorithm 1 satisfy the conditions of Lemma \ref{lem:completeness}
due to lines 8-10.
Next we will prove that Algorithm 1 does not produce duplicate paths.
\begin{lem}
\label{lem:nonexistence}
Let $P$ and $Q$ be two Hamiltonian paths produced by Algorithm 1.
If there is a
bijection (permutation of the cordinates) $f: \{1,2,\cdots,d\} \rightarrow \{1,2,\cdots,d\}$ such that $f(P)=Q$
then it is the identity mapping, i.e. $P=Q$. 
\end{lem}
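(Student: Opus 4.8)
The plan is to exploit the canonical labelling that Algorithm 1 enforces, namely that every output path starts at vertex $0$ and introduces its positive directions in the increasing order $1,2,\dots,d$. This is precisely the property of Lemma \ref{lem:completeness}, and it holds for all outputs of the algorithm because of the pruning in lines 8--10. For a path $R$ I would record the function $\pi_R:\{1,\dots,d\}\to\{1,\dots,2^d-1\}$ sending $i$ to the step at which the positive signed direction $+i$ is first used along $R$. The canonical property says exactly that $\pi_P$ and $\pi_Q$ are both strictly increasing, and this is the feature I intend to leverage.

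First I would pin down how a coordinate permutation acts on signed directions. Since $f$ merely relabels bit positions and never flips a bit value, it fixes the all-zero vertex and sends the signed direction $+i$ to $+f(i)$ and $-i$ to $-f(i)$; in particular it is sign-preserving. Consequently, reading off the step-by-step direction sequence of $Q=f(P)$ amounts to applying $f$ to each entry of the direction sequence of $P$. Hence the first step at which $+j$ appears in $Q$ is exactly the first step at which $+f^{-1}(j)$ appears in $P$, which gives the key identity $\pi_Q(j)=\pi_P(f^{-1}(j))$ for every $j\in\{1,\dots,d\}$.

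The decisive step is then a monotonicity argument. Reading the chain $\pi_Q(1)<\pi_Q(2)<\dots<\pi_Q(d)$ through the identity above yields $\pi_P(f^{-1}(1))<\pi_P(f^{-1}(2))<\dots<\pi_P(f^{-1}(d))$. Because $\pi_P$ is itself strictly increasing, this forces $f^{-1}(1)<f^{-1}(2)<\dots<f^{-1}(d)$; but a strictly increasing permutation of the totally ordered set $\{1,\dots,d\}$ must be the identity. Therefore $f^{-1}$, and hence $f$, is the identity, and we conclude $P=f(P)=Q$.

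I expect the only genuinely delicate point to be the bookkeeping in the second paragraph: verifying with care that a coordinate permutation preserves orientations, so that $+i\mapsto +f(i)$ rather than $-f(i)$, and that the first-occurrence positions transport correctly as $\pi_Q=\pi_P\circ f^{-1}$. Once that identity is secured, the remainder is the immediate observation that an order-preserving bijection of a finite totally ordered set is the identity, so no further calculation is needed.
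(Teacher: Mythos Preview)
Your proof is correct and follows essentially the same approach as the paper: both arguments use that outputs of Algorithm~1 satisfy the canonical ordering of Lemma~\ref{lem:completeness}, and deduce that the permutation $f$ must preserve the order in which positive directions are first used, hence is the identity. Your version is simply a more explicit and carefully quantified rendition of the paper's terse two-line argument.
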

\begin{proof}
[Proof of Lemma~\ref{lem:nonexistence}]
As remarked, both $P$ and $Q$ satisfy the conditions of Lemma \ref{lem:completeness}.
Since $f(P)=Q$, both paths must use the $k$-th positive
direction for the first time at the same time. By the lemma this must be direction $+k$,
hence $f$ is the identity mapping.
\end{proof}

As a consequence of these two lemmas we have the following result.
\begin{thm}
Algorithm 1 provides a complete duplicate free list of Hamiltonian paths on AUSO-cubes that follow a given
history based pivot rule.
\end{thm}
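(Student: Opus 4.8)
The plan is to deduce the theorem by combining the two lemmas just established with a verification that the pruning performed by Algorithm 1 never discards a valid path. I read the statement as two separate claims: \emph{completeness} (every Hamiltonian path on an AUSO-cube that follows the given rule is equivalent to some output path) and \emph{freedom from duplicates} (no equivalence class is output twice). Lemma \ref{lem:nonexistence} settles the second claim directly, since it shows that any two output paths related by a coordinate permutation must coincide. The real work is therefore the completeness claim.

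For completeness I would start from an arbitrary Hamiltonian path $P$ on an AUSO $d$-cube that follows the given history based rule, and apply Lemma \ref{lem:completeness} to relabel the cube so that $P$ begins at vertex $0$ and uses its positive directions first in the increasing order $1,2,\dots,d$. It then suffices to show that this canonically labelled $P$ is one of the paths that the depth-first search of Algorithm 1 actually reaches and outputs. I would argue this by induction on the length of the prefix of $P$ constructed so far: at the current end of the path the rule selects an entering variable of minimum history value, and line 7 together with the loop beginning at line 8 branches over every feasible direction attaining that minimum. Since ties may be broken arbitrarily, the next edge of $P$ is among these branches, so every prefix of $P$ survives unless it is removed by one of the two pruning tests.

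The crux is therefore to show that neither prune removes the canonical $P$. The prune on lines 8--10 skips a first-use direction $t$ whenever a smaller index $t'$ is also a first-use direction; this enforces exactly the increasing first-use order of positive directions, which is the property Lemma \ref{lem:completeness} guarantees for our relabelled $P$, so $P$ is never cut here. The prune on lines 13--15 discards a move along an already-used direction $t$ (that is, $h(t)\neq 0$) precisely when the target vertex $v$ would end up with indegree $1$; but Corollary \ref{col:indeg} says that on an AUSO-cube a vertex has indegree $1$ if and only if it is reached by a first-use positive direction, so along $P$ a used (non-first-use) direction never produces an indegree-$1$ vertex, and this test is never triggered either. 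Because the indegree of $v$ is fixed as soon as $v$ is appended to the path, this condition can indeed be evaluated on the fly. Once the prefix has grown to all $2^d$ vertices, line 3 checks the Williamson Hoke degree sequence of Theorem \ref{thm:Hoke}; since $P$ lies on an AUSO this test passes and $P$ is output, so every equivalence class of qualifying Hamiltonian paths is represented.

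Combining the two parts yields the theorem: by the completeness argument every such Hamiltonian path is equivalent to an output path, and by Lemma \ref{lem:nonexistence} distinct output paths are pairwise inequivalent, so Algorithm 1 produces exactly one representative of each equivalence class. The step I expect to be the main obstacle is the soundness of the indegree prune on lines 13--15: one must be certain that the partial degree information available when $v$ is appended genuinely determines $v$'s final indegree, so that Corollary \ref{col:indeg} may be invoked, and that the reading of ``$h(t)=0$'' as ``first use'' underlying lines 8--10 is legitimate for each of the rules of Section \ref{section:history}, whose history arrays $h$ are initialized in different ways.
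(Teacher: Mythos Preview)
Your proof is correct and follows essentially the same approach as the paper, which simply declares the theorem an immediate consequence of Lemmas \ref{lem:completeness} and \ref{lem:nonexistence} without further argument. You go further than the paper by explicitly verifying that neither pruning step (lines 8--10 via the canonical labelling of Lemma \ref{lem:completeness}, lines 13--15 via Corollary \ref{col:indeg}) discards the canonical representative, and your closing caveat about whether $h(t)=0$ genuinely encodes ``first use'' for every rule of Section \ref{section:history} is a legitimate concern that the paper itself leaves unaddressed.
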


\subsection{Computational results}
We implemented the algorithm and ran it on an
Opteron computer with $2.2$GHz CPU, $4 \times 4 = 16$ processors and $132$GB of memory. 
We were able to do a complete enumeration up to dimension $6$ and the results are shown in
Table \ref{table:result}. The other rules refer to the least-recently considered, least-recently
basic and least iterations in the basis rules.
We see that the number of Hamiltonian path increases 
exponentially with Zadeh's least entered rule, whereas it becomes zero with the other pivot rules.
On the basis of these results we conjecture that, except for the least entered rule,
such Hamiltonian paths do not exist in any dimension greater than 6. 
We present proofs of these conjectures in the next section
for all rules except for the least-used direction rule.
\begin{table}[!ht]
\begin{center}
\caption{The number of Hamiltonian paths produced by history based pivot rules}
\label{table:result}
\end{center}
\begin{center}
\begin{tabular}{|c||c|c|c|c|c|c|}
\hline
Dimension & $2$ & $3$ & $4$ & $5$ & $6$ & $7$ \\ \hline
Least entered rule& $1$ & $2$ & $17$ & $1,072$ & $3,262,342$ & $> 10^{10}$\\ \hline
Least-used direction& $1$ & $1$ & $1$ & $2$ & $0$ & $0$  \\ \hline
Least recently entered& $1$ & $1$ & $1$ & $0$ & $0$ & $0$  \\ \hline
Least-recently considered rule & $1$ & $3$ & $13$ & $0$ & $0$ & $0$ \\ \hline
Least-recently basic rule & $1$ & $0$ & $0$ & $0$ & $0$ & $0$ \\ \hline
Least iterations in basis rule & $1$ & $0$ & $0$ & $0$ & $0$ & $0$ \\ \hline
\end{tabular}
\end{center}
\end{table}

We also conducted an experiment to check whether these paths satisfy the
Holt-Klee condition \cite{holt1998proof}, a necessary condition for realizability of an LP-cube which states that every $d$-dimensional faces have at least $d$ disjoint paths from a unique source to a unique sink (See Table \ref{table:result2}).
For dimension 7 with least entered rule, we have not found any such paths, 
but the computation was not completed due to the long running time.

\begin{table}[!ht]
\begin{center}
\caption{The number of Hamiltonian paths produced by history based pivot rules which satisfy Holt-Klee condition}
\label{table:result2}
\end{center}
\begin{center}
\begin{tabular}{|c||c|c|c|c|c|c|}
\hline
Dimension & $2$ & $3$ & $4$ & $5$ & $6$ & $7$ \\ \hline
Least entered rule& $1$ & $2$ & $12$ & $79$ & $360$ & $?$\\ \hline
Least-used direction& $1$ & $1$ & $1$ & $0$ & $0$ & $0$  \\ \hline
Least recently entered& $1$ & $1$ & $1$ & $0$ & $0$ & $0$  \\ \hline
Least-recently considered rule & $1$ & $3$ & $12$ & $0$ & $0$ & $0$ \\ \hline
Least-recently basic rule & $1$ & $0$ & $0$ & $0$ & $0$ & $0$ \\ \hline
Least iterations in basis rule & $1$ & $0$ & $0$ & $0$ & $0$ & $0$ \\ \hline
\end{tabular}
\end{center}
\end{table}

\section{Non-existence of Hamiltonian paths}
\label{section:none}
In this section we prove that, except for the least entered rule and least-used direction rule,
there are no Hamiltonian paths for the history based pivot rules considered except for those
shown in Table \ref{table:result}.
\begin{thm}
The least iterations in basis rule and the least-recently basic rule
do not have any Hamiltonian paths on a $d$-cube for $d \ge 3$.
\end{thm}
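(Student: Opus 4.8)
The plan is to reduce to a canonical labelling using Lemma~\ref{lem:completeness}, then to show that both rules are forced to spend their first $d$ pivots climbing to the top vertex of the cube, and finally to argue that from that top vertex the remaining $2^d-(d+1)$ vertices can never all be reached. First I would assume, via Lemma~\ref{lem:completeness}, that the path starts at $0$ and first uses the positive directions in the order $1,2,\dots,d$.

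The key structural step, which is the same mechanism for both rules, is an \emph{ascent lemma}: the path is forced to visit $0,\,0\cdots01,\,0\cdots011,\dots,\,1\cdots1$, reaching the all-ones vertex at step $d+1$. To prove it, note that for both rules a signed direction attains history value $0$ exactly when its variable has never been in the basis: for the least iterations in basis rule $h(+i)=0$ means bit $i$ has never been $1$, and for the least-recently basic rule the same holds under the convention that a never-basic variable left the basis at time $0$. At any prefix of the ascent some bit has never been set, so a fresh positive direction with $h=0$ is feasible; meanwhile any feasible negative direction $-i$ must flip a bit that is currently $1$, and since the path started at $0$ that bit was $0$ at least once, forcing $h(-i)\ge 1$. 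Hence a fresh positive direction is the strict minimizer at each of the first $d$ steps, the Hamming weight increases by one each time, and after relabelling these pivots as $+1,\dots,+d$ the path is exactly the chain of prefix sums ending at the all-ones vertex. By Theorem~\ref{thm:Hoke} and Corollary~\ref{col:indeg} this chain accounts for all $d$ vertices of indegree one, which I expect to be useful bookkeeping for the descent.

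It remains to show that no completion exists from the all-ones vertex. My plan is to follow the forced descent: at the top vertex the only feasible moves are negative, and both rules are forced to take $-1$ (for the least iterations rule because $h(-i)=i$ there is minimized at $i=1$), so the descent begins by peeling off bits in a constrained order. I would then track the history arrays and maintain an invariant on the set of still-reachable unvisited vertices, showing that the path must eventually arrive at a vertex all of whose neighbours are already visited while some vertex remains unvisited, i.e. it reaches a dead end before covering the cube. The hard part will be that, unlike the ascent, the descent develops ties and therefore branches, so a single forced trajectory no longer suffices: the proof must supply an invariant that is preserved under \emph{every} tie-break. A natural alternative route is induction on $d$ by restricting to the facet where bit $d$ equals $1$, whose induced orientation is again an AUSO; the delicate point there is controlling how the global history array couples the two facets across the cut, since the restricted path need not obey the same rule on the $(d-1)$-cube. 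I would anchor the argument at the base case $d=3$, where the descent is in fact forced and yields the path $0,1,3,7,6,4,5$, which gets stuck with $010$ unvisited.
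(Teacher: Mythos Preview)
Your ascent lemma is exactly right and matches the paper: after the relabelling of Lemma~\ref{lem:completeness} both rules are forced through $0,1,3,\dots,2^d-1$, and you correctly note that these $d$ vertices $1,3,\dots,2^d-1$ already exhaust the quota of indegree-$1$ vertices allowed by Theorem~\ref{thm:Hoke}. You also correctly argue that from the top vertex both rules are forced to take $-1$, arriving at $2^d-2$.

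But at this point you are done, and you miss it. The vertex $2^d-2=11\cdots10$ has as neighbours $2^d-1$ and the $d-1$ vertices obtained by turning off one further bit; for $d\ge 3$ none of the latter lie in $\{0,1,3,\dots,2^d-1\}$, so the only previously visited neighbour of $2^d-2$ is $2^d-1$. Hence $2^d-2$ has indegree~$1$ in the induced orientation, giving $d+1$ vertices of indegree~$1$ and contradicting Theorem~\ref{thm:Hoke}. This is precisely the paper's proof: the ``useful bookkeeping'' you set aside is not bookkeeping for a later stage, it is the contradiction itself, triggered after a single descent step.

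Everything you plan beyond that---tracking the full descent, an invariant robust under all tie-breaks, an induction across the facet $x_d=1$---is unnecessary, and in the form stated it is only a plan, not a proof. The dead-end you exhibit for $d=3$ is correct but is the hard way to see the phenomenon; the indegree count kills all dimensions $d\ge 3$ uniformly after exactly $d+1$ edges.
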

\begin{proof}
Suppose there is such a Hamiltonian path $P$ for some $d$-cube.
From Lemma \ref{lem:completeness} we can verify that the first $d$ edges of $P$
must take the directions $1,2,...,d$. Therefore $P$ begins with $d+1$ vertices
$0,1,3,...,2^d-1$. The origin has indegree $0$ and the other
$d$ vertices have indegree $1$ in the AUSO induced by $P$.
At this point, for each of the pivot rules,
the direction $-1$ has the minimal value of $h$ among the all the outgoing directions, 
so the first $d+1$ steps have to be $0,1,...,2^d-1,2^d-2$. 
The vertex $2^d-2$ also has indegree $1$, since it is not adjacent to any of the other vertices
already on $P$. There are $d+1$ vertices which have indegree $1$ in total: $1,3,7,\cdots,2^d-1,$ and $2^d-2$. 
This violates Williamson-Hoke's condition
given in Theorem \ref{thm:Hoke} which allows only $d$ vertices to have indegree $1$.
\end{proof}

We remark that this proof also can be used to show that
the least-recently considered rule cannot have a Hamiltonian path for any $d \geq 3$
if the ordering begins with $+1, +2, \cdots, +d, -1$.

\begin{thm}
\label{thm:lre}
The least-recently entered rule does 
not have any Hamiltonian paths on a $d$-cube for $d \ge 5$.
\end{thm}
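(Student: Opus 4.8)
The plan is to imitate the forced-move argument used for the least-iterations-in-basis and least-recently-basic rules, but to carry it several pivots further and to use Corollary~\ref{col:indeg} (equivalently Theorem~\ref{thm:Hoke}) as the only source of contradiction. By Lemma~\ref{lem:completeness} I may assume that $P$ starts at vertex $0$ and that the positive directions are first used in the order $+1,+2,\dots,+d$. The first two edges are then forced: at vertex $1$ the only improving moves are $+2,\dots,+d$ (the edge $-1$ leads back to the visited vertex $0$), so $P$ begins $0,1,3$. The observation I would record first is a restatement of Corollary~\ref{col:indeg}: the eventual indegree of a vertex equals the number of its neighbours that already lie on the path when it is entered, so any pivot that is \emph{not} a first use of a positive direction --- that is, any negative pivot or any repeated positive pivot --- must enter a vertex that already has a visited neighbour other than its immediate predecessor. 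This single requirement is what will eventually prune every extension.

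Next I would run the least-recently entered rule forward, maintaining $h(t)$ as the last step at which signed direction $t$ was used ($h(t)=0$ if $t$ is unused) and, at each vertex, choosing a feasible direction of minimum $h$. Two features organise the search. First, since an unused direction has $h=0$, the rule always prefers a first use to a repetition as long as some feasible unused direction remains; hence the path repeats no direction until it reaches a vertex all of whose feasible improving moves use already-used directions. Second, a pivot is forced whenever the feasible minimum-$h$ direction is unique, so the only branching comes from ties among equal-$h$ feasible directions, and the chosen labelling already fixes the order in which positive first uses may be taken --- leaving, at each step, essentially the choice between the next positive first use and an available negative first use. The plan is therefore to grow the tree of surviving prefixes: at every node the forced minimum-$h$ move is either a negative or repeated positive pivot landing on a vertex with no other visited neighbour, which violates the restated Corollary and prunes the branch, or else we branch on the finitely many ties. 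The goal is to show that for $d\ge 5$ every branch is pruned before the path reaches $2^d$ vertices.

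The main obstacle is that surviving prefixes are not short. The extremal ``climb'' branch $0,1,3,7,\dots,2^d-1$ is forced to make a long mirror-descent back through the opposite half of the cube, and this prefix uses roughly $2d$ distinct directions before the rule is first compelled to repeat one; so a short prefix enumeration cannot settle the question. The heart of the proof is to show, uniformly in $d\ge 5$, that shortly after the rule begins repeating directions the recency values leave no feasible repeated move except one entering an otherwise-unvisited vertex, contradicting Corollary~\ref{col:indeg}, and to check that this obstruction occurs in \emph{every} tie-branch and not only in the climb. I expect the two delicate points to be (i) packaging the recency bookkeeping into one structural statement that covers all dimensions at once, rather than arguing dimension by dimension, and (ii) pinning down why the room needed for the fatal repeated pivot first appears exactly at $d=5$ --- consistent with the enumeration, which finds a single surviving Hamiltonian path in dimensions $3$ and $4$ and none beyond.
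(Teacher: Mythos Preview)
Your plan is essentially the paper's strategy: force the path forward step by step, prune with the indegree criterion, and find a contradiction. The paper carries this out by showing that any such path must begin with four explicitly described segments $Q_1,Q_2,Q_3,Q_4$. The segment $Q_1=\{0,1,3,\dots,2^d-1\}$ is your ``climb''; $Q_2$ is the mirror descent you anticipate, and it is forced precisely by your restated Corollary~\ref{col:indeg} (at each step of $Q_2$ every tied negative direction but one lands on a vertex with only its predecessor visited). Segments $Q_3$ and $Q_4$ are then forced not by the indegree criterion but simply because the recency values $h(t)$ are all distinct at that point and the minimum feasible direction is unique. So your worry about a large branching tree is misplaced: with the paper's initialization $h(-i)=1>0=h(+i)$ there is no tie between positive and negative first uses, $Q_1$ is completely forced, and after $Q_2$ there is no branching at all.

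The genuine gap in your plan is the final contradiction. You intend to prune every branch using only Corollary~\ref{col:indeg} (a vertex reached by a non-first-use direction must have indegree $\ge 2$). But along the forced prefix $Q_1\cup Q_2\cup Q_3\cup Q_4$ this criterion is never violated: every vertex entered by a repeated or negative direction does have at least two already-visited neighbours. The paper's contradiction is of a different nature. It observes that $Q$ contains $2^{d-1}+2$, then $2^{d-1}+2+4+8$, then $2^{d-1}+2+4$, while $2^{d-1}+2+8$ lies outside $Q$; these four vertices form a $2$-face in which both $2^{d-1}+2$ and $2^{d-1}+2+4+8$ are sources. This is a direct violation of the USO property on a face, not an instance of the global indegree-$1$ count, and your pruning rule would not detect it. To complete your approach you would either have to push the forced path considerably further until a global indegree violation finally appears, or---as the paper does---broaden your contradiction criterion to include local face violations.
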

\begin{proof}
Suppose $P$ is a Hamiltonian path produced by Algorithm 1 for the least-recently entered rule when $d \ge 5$.
We will show that $P$ must begin with the sequence of vertices $Q=\{Q_1,Q_2,Q_3,Q_4\}$ 
where
$Q_1=\{0,1,3,\cdots,2^d-1\}$,
$Q_2=\{2^d-1-2^{d-2},2^d-1-2^{d-2}-2^{d-3},\cdots,2^{d-1}\}$,
$Q_3=\{2^{d-1}+2,2,6,14,\cdots,2^d-2\}$ and
$Q_4=\{
2^d-2-2^{d-2},2^d-2-2^{d-2}-2^{d-3},\cdots,2^{d-1}+2+4+8,2^{d-1}+2+4\}.$

$Q$ includes the vertices $\{2^{d-1}+2,2^{d-1}+2+4+8,2^{d-1}+2+4\}$ as a subsequence 
and does not contain the vertex $2^{d-1}+2+8$. These four vertices lie on a 2-face 
which has two sources, $2^{d-1}+2$ and $2^{d-1}+2+4+8$, a contradiction (See Figure \ref{fig:contradiction}). 
\begin{figure}[!ht]
\begin{center}
\includegraphics[width=0.6\textwidth ]{./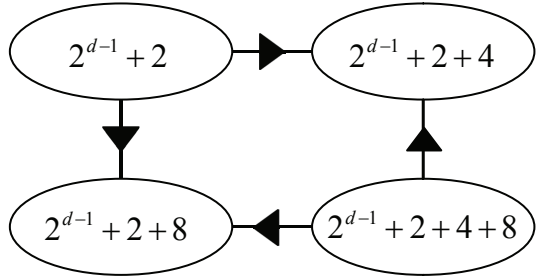}
\end{center}
\caption{$2$-dimensional face with two sources and two sinks}
\label{fig:contradiction}
\end{figure}
It remains to show that $P$ begins as specified.

\begin{itemize}
\item
$Q_1=0,1,3\cdots,2^d-1$. 
This follows from Lemma \ref{lem:completeness}.
\item
$Q_2=2^d-1,2^{d}-2^{d-2}-1,2^{d}-2^{d-2}-2^{d-3}-1,\cdots,2^{d-1}$.

We prove this by mathematical induction.
For the basic step, we will show only $2^{d}-2^{d-2}-1$ can come right after $2^{d}-1$.
When we visited the vertex $2^{d}-1$, all of the bits are $1$. 
It means the next vertex can be represented as $2^{d}-2^{k}-1 = \sum_{i=0}^{d-1}2^i-2^k\; (d-1> k \geq 0)$. By Corollary \ref{col:indeg}, vertex $\sum_{i=0}^{d-1}2^i-2^k$ should have two 
visited neighbours, one of which is obviously the vertex $2^{d}-1$.
In other words, there exists $j\neq k$ such that $\sum_{i=0}^{d-1}2^i-2^k- 2^j \in \left\{ 0,1,3\cdots,2^d-1 \right\} = \left\{v|\exists l\;s.t.\;v=\sum_{i=0}^{l} 2^i \right\}\cup\left\{0\right\}$. Since $d\geq 3$ forces $\sum_{i=0}^{d-1}2^i-2^k- 2^j$ not to be equal to $0$, $\sum_{i=0}^{d-1}2^i-2^k- 2^j$ should be represented as $\sum_{i=0}^{l} 2^i=\sum_{i=0}^{d-1} 2^i-\sum_{i=l+1}^{d-1}2^i$ for certain $l$. Therefore, the $(k,j)$ equal $(d-1,d-2)$ or $(d-2,d-1)$, and $d-1>k$ requires $k = d-2$. (See Figure \ref{fig:thm4_2} for the binary representation).

\begin{figure}[!ht]
\begin{center}
\includegraphics[width=0.6\textwidth ]{./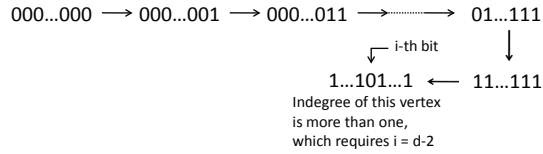}
\end{center}
\caption{Binary representation for the basic step of $Q_2$}
\label{fig:thm4_2}
\end{figure}

We can prove the inductive step similarly.
If the path is continued by $2^d-1,2^{d}-1-2^{d-2},\cdots,2^{d}-1-\left\{\sum_{i=d-2-k}^{d-2} {2^i}\right\}$, 
the next vertex should be equal to $\sum_{i=0}^{d-1}2^i-\sum_{i=d-2-k}^{d-2} {2^i}+2^j\;(d-2-k\leq j\leq d-2)$ or $\sum_{i=0}^{d-1}2^i-\sum_{i=d-2-k}^{d-2} {2^i}-2^j\;(j=d-1$ or $j < d-2-k)$.
By Corollary \ref{col:indeg}, two neighbours of it 
are in $\left\{0,1,3\cdots,2^d-1,2^{d}-1-2^{d-2},\cdots,2^{d}-1-\sum_{i=d-2-k}^{d-2} {2^i} \right\}$
Using binary numbers, $2^{d}- \left\{\sum_{k=0}^{i} {2^{d-(2+k)}}\right\} -1$ can be denoted $100\dots 0011\dots 11$, where we have $k+1$ $0$s.
 (See Figure \ref{fig:thm4_2_2} for the binary representation).
\begin{figure}[!ht]
\begin{center}
\includegraphics[width=0.6\textwidth ]{./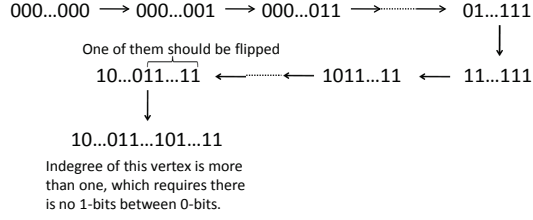}
\end{center}
\caption{Binary representation for the inductive step of $Q_2$}
\label{fig:thm4_2_2}
\end{figure}

\item
$Q_3=\{2^{d-1}+2,2,6,14,\cdots,2^d-2\}$

At the vertex $2^{d-1}$, the history array becomes 
\[
h(x) = \left\{\begin{array}{l}d+x\mbox{ (if }x > 0\mbox{ )}\\1\mbox{ (if }x=-d\mbox{ )}\\2d+1-x\mbox{ (if }d-1 \leq x\leq 0\mbox{ )}\end{array}\right.
\]
(See Table \ref{table:lrere}).
\begin{table}[!ht]
\begin{center}
\caption{The history information of least-recently entered rule. \label{table:lrere}
}
\end{center}
\begin{center}
\scalebox{0.6}{
\begin{tabular}{|c|c|c|c|c|c|c|c|c|c|c|c|c|}
\hline
\multicolumn{1}{|c|}{Vertex}&\multicolumn{11}{|c|}{direction}&\multirow{2}*{Comment} \\\cline{2-12}
(binary)& $+1$ & $-1$  & $+2$& $-2$ &$+3$& $-3$ &$\cdots$& $+d-1$& $-(d-1)$& $+d$ & $-d$  & \\\hline
$0(000...000)$& $0$ & $1$  & $0$& $1$ &$0$& $1$ &$\cdots$& $0$ & $1$  & $0$ & $1$  & initial state\\\hline
$1(000...001)$& $2$ & $1$  & $0$& $1$&$0$& $1$ &$\cdots$& $0$ & $1$  & $0$     & $1$  & \\\hline
$3(000...011)$& $2$ & $1$  & $3$& $1$ &$0$& $1$ &$\cdots$& $0$ & $1$  & $0$     & $1$  & \\\hline
$\vdots$& $\vdots$ & $\vdots$  & $\vdots$& $\vdots$ &$\vdots$&$\vdots$ &$\vdots$& $\vdots$ & $\vdots$  & $\vdots$     & $\vdots$  & \\\hline
$2^{d-1}-1(011...111)$& $2$ & $1$  & $3$& $1$ &$4$& $1$ &$\cdots$& $d$ & $1$  & $0$     & $1$  & \\\hline
$2^{d}-1(111...111)$& $2$ & $1$  & $3$& $1$ &$4$& $1$ &$\cdots$& $d$ & $1$  & $d+1$     & $1$  & end of $Q_1$\\\hline
$2^{d}-1-2^{d-2}(101...111)$& $2$ & $1$  & $3$& $1$ &$4$& $1$ &$\cdots$& $d$ & $d+2$  & $d+1$     & $1$  & \\\hline
$\vdots$& $\vdots$ & $\vdots$  & $\vdots$& $\vdots$ &$\vdots$&$\vdots$ &$\vdots$& $\vdots$ & $\vdots$  & $\vdots$     & $\vdots$  & \\\hline
$2^{d-1}+1(100...001)$& $2$ & $1$  & $3$& $2d-1$ &$4$& $2d-2$ &$\cdots$& $d$ & $d+2$  & $d+1$     & $1$  &\\\hline
$2^{d-1}(100...000)$& $2$ & $2d$  & $3$& $2d-1$ &$4$& $2d-2$ &$\cdots$& $d$ & $d+2$  & $d+1$     & $1$  &end of $Q_2$
\\\hline
$2^{d-1}+2(100...010)$& $2$ & $2d$  & $2d+1$& $2d-1$ &$4$& $2d-2$ &$\cdots$& $d$ & $d+2$  & $d+1$     & $1$  & \\\hline
$2(000...010)$& $2$ & $2d$  & $2d+1$& $2d-1$ &$4$& $2d-2$ &$\cdots$& $d$ & $d+2$  & $d+1$     & $2d+2$  & \\\hline
$6(000...110)$& $2$ & $2d$  & $2d+1$& $2d-1$ &$2d+3$& $2d-2$ &$\cdots$& $d$ & $d+2$  & $d+1$     & $2d+2$  & \\\hline
$\vdots$& $\vdots$ & $\vdots$  & $\vdots$& $\vdots$ &$\vdots$&$\vdots$ &$\vdots$& $\vdots$ & $\vdots$  & $\vdots$     & $\vdots$  & \\\hline
$2^{d-1}-2(011...110)$& $2$ & $2d$  & $2d+1$& $2d-1$ &$2d+3$& $2d-2$ &$\cdots$& $3d-1$ & $d+2$  & $d+1$     & $2d+2$  & \\\hline
$2^d-2(111...110)$& $2$ & $2d$  & $2d+1$& $2d-1$ &$2d+3$& $2d-2$ &$\cdots$& $3d-1$ & $d+2$  & $3d$     & $2d+2$  & end of $Q_3$ \\\hline

\end{tabular}
}
\end{center}
\end{table}
Although its minimum value is $1$, when $x=-d$, and the second smallest value is $2$, when $x=+1$,
we can not use either the direction $-d$ or $+1$, since they lead to visited vertices.
That leads us to use the direction $+2$, whose value is third smallest.
The vertex $2^{d-1}+2$ enables us to use the direction $-d$ at last.
Afterward, to avoid visiting an already visited vertex, we have to follow the sequence 
$\{2^{d-1}+2,2,6,14,\cdots,2^d-2\}$

\item
$Q_4=\{
2^d-2-2^{d-2},2^d-2-2^{d-2}-2^{d-3},\cdots,2^{d-1}+2+4+8,2^{d-1}+2+4\}$

This follows the same reasoning as $Q_3$, that is, using the smallest direction which reaches unvisited vertex fixes $Q_4$. Note that direction $+1$ can not be used because the destination has already been visited in $Q_2$.
\end{itemize}
\end{proof}
For the least entered rule, we can prove the following feature concerning
the beginning of any Hamiltonian path.

\begin{thm}
\label{thm:feature-zadeh}
For every $d$ dimensional Hamiltonian path using the least
entered rule, different signed directions are used for the first $2d-1$ steps.
\end{thm}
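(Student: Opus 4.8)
The plan is to prove the contrapositive locally: if $s$ is the first step at which Zadeh's rule repeats a signed direction, then $s\ge 2d$. After relabelling the cube as in Lemma~\ref{lem:completeness}, I would assume $P$ starts at $0$ with the positive directions first used in the order $+1,\dots,+d$, write the path as $v_1=0,v_2,\dots$ with $t_k$ the signed direction of the $k$-th edge, and suppose $t_1,\dots,t_{s-1}$ are distinct while $t_s$ is a repeat. The guiding observation is that a repeat can only occur at $v_s$ if \emph{no} feasible (outgoing) edge at $v_s$ carries a fresh direction: since Zadeh's rule chooses a feasible direction of minimum $h$, any unused signed direction leading to an unvisited neighbour has $h=0$ and would be taken, making $t_s$ fresh. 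So the whole problem reduces to showing that for $s\le 2d-1$ some fresh feasible direction always exists at $v_s$.

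First I would dispose of the case where not all positive directions have been used. Let $p$ be the number of positive directions among $t_1,\dots,t_{s-1}$, so by Lemma~\ref{lem:completeness} these are exactly $+1,\dots,+p$. If $p<d$, then bit $p+1$ has never been set in any visited vertex (it could only first be set by the unused direction $+(p+1)$), so $v_s$ has bit $p+1$ equal to $0$ and its neighbour in direction $+(p+1)$ is unvisited; thus $+(p+1)$ is fresh and feasible and no repeat occurs. This forces $p=d$, and writing $s-1=p+|N|$ where $N$ is the set of used negative directions gives $|N|=s-1-d\le d-2$.

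The heart of the argument is the case $p=d$. Now the only fresh valid moves at $v_s$ are the directions $-j$ clearing a set bit $j$, and a repeat requires all of these to reach visited vertices. Let $B$ be the set of set bits of $v_s$, let $M=\{j:-j\in N\}$ (so $|M|=|N|$, $|B|=d-|N|$), and let $Q$ be the face spanned by the coordinates in $B$ with the coordinates in $M$ fixed at $0$. All neighbours of $v_s$ inside $Q$ are then visited, hence point into $v_s$, so $v_s$ is the sink of $Q$; by acyclicity every vertex of $Q$ has a directed path to $v_s$ and therefore precedes it on $P$, so all $2^{|B|}$ vertices of $Q$ occur among $v_1,\dots,v_s$. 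To bound the vertices visited outside $Q$ I would split the outside portion of $P$ into maximal excursions: an excursion using $c$ of the moves $\{+j:j\in M\}$ uses $c$ matching moves $\{-j:j\in M\}$, and because it begins and ends at \emph{distinct} vertices of $Q$ (the path is simple) at least one coordinate of $B$ must change inside it; counting the edges of the excursion then shows it contains at least $2c$ outside vertices. Summing over excursions gives at least $2|N|$ outside vertices, so $s\ge 2^{d-|N|}+2|N|$, and a short calculation shows $2^{d-|N|}+2|N|\ge 2d$ for $0\le |N|\le d-2$. This contradicts $s\le 2d-1$.

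I expect the main obstacle to be precisely this excursion count in the $p=d$ case: making rigorous that each excursion forces at least $2c$ new vertices, which amounts to ruling out an immediate return via the same coordinate (that would revisit a vertex) and more generally showing that a flat move in some $B$-coordinate is unavoidable. Verifying the combined bound $2^{d-|N|}+2|N|\ge 2d$ and noting its tightness at $|N|=d-2$, $|B|=2$ — exactly the boundary case $s=2d-1$ — is where the constant $2d-1$ comes from and must be handled carefully. By contrast, the reduction to $p=d$ through the ``an unused positive direction is always feasible'' observation should be routine once Lemma~\ref{lem:completeness} is in hand.
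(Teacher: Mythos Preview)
Your plan is sound and would yield a correct proof, but it is a genuinely different and considerably heavier argument than the paper's. In the case $p=d$ the paper does not invoke the AUSO face structure, sinks, or any excursion counting at all: it simply notes that since at most $d-2$ negative directions have been used there are at least two unused ones, say $-t_1,-t_2$ with $+t_1$ used before $+t_2$. Because neither $-t_1$ nor $-t_2$ has occurred, once bit $t_1$ (resp.\ $t_2$) is set it stays set; hence every visited vertex with bit $t_2=1$ also has bit $t_1=1$. But $move(v_s,-t_1)$ has bit $t_1=0$ and bit $t_2=1$, so it is unvisited, and $-t_1$ is a fresh feasible direction. That is the whole argument.

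Your route---showing $v_s$ is the sink of the face $Q$, deducing that all $2^{d-|N|}$ vertices of $Q$ precede $v_s$ on $P$, bounding the outside vertices by $2|N|$ via excursions, and verifying $2^{d-|N|}+2|N|\ge 2d$---does work and even yields a quantitative lower bound on the first repeat time, making the tightness at $|N|=d-2$ visible. However, it leans on the Hamiltonian/AUSO hypotheses (so that every vertex of $Q$ actually lies on $P$ and the unique-sink property of $Q$ is available), whereas the paper's argument uses only that the path is simple. In short: you share the paper's $p<d$ reduction, but in the $p=d$ case the paper replaces your face-and-excursion machinery with a two-line ordering argument on a pair of unused negative directions.
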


\begin{proof}Let $v$ be a vertex visited during the first $2d-1$ steps.
It is enough to show there is a direction $t$ 
and an unvisited vertex $move(v,t)$
for which $h(t) = 0$. 

If $t>0$ and $h(t ) = 0$ then vertex $move( v,t )$ cannot have been visited yet, so $t$ is a candidate direction.
Otherwise,
if for each $t>0$ we have $h( t ) = 1$ and there exists at least two negative directions 
$- t_1, -t_2$ such that $h(-t_1)=h(-t_2) = 0$.
Assume that the direction $+t_1$ was used earlier than $+t_2$.
So for all vertices visited so far, it is impossible to have both the $t_2$-th bit at $1$ and 
the $t_1$-th bit at $0$.
This means that the vertex $move(v, -t_1)$ has not been visited and 
$-t_1$ is a candidate direction.\end{proof}
\section{Discussion}
From our computational experiments, Zadeh's
least entered rule seems very likely to have Hamiltonian paths on AUSO cubes.
Using our program, we could verify such paths exist up to dimension 9, but did not
yet find any for dimension 10.
Furthermore, we could not find any general construction, so this is an open problem.
Even if such Hamiltonian paths exist, it is not clear whether or not they could be obtained on
AUSOs that are realizable as polytopes.

Although we showed that a number of history based pivot rules do not admit 
Hamiltonian paths in general, they may still admit exponential length paths.
Since our program makes heavy use of the fact that we are searching for Hamiltonian
paths, we were not able to use it to check this for low dimensions.

Our computer result for Zadeh's rule allow ties to be broken arbitrarily, as does the theoretical lower bound obtained in \cite{Fr10}.
It would be interesting to see the effects of various deterministic tie breaking rules on these results.
\section{Acknowledgments}
We are grateful to ERATO-SORST Quantum Computation and Information
Project, Japan Science and Technology Agency.
All of our computational experiments are conducted on the cluster
computer in ERATO-SORST.
Work on this project was also supported by an INTRIQ-ERATO/SORST collaboration
grant funded by MDEIE(Qu\'{e}bec), a discovery grant
from NSERC(Canada), and KAKENHI(Japan).

We thank the anonymous reviewers for their useful comments to improve this paper. 
\bibliographystyle{plain}
\bibliography{simplex_method}

\newpage
\section{Appendix}
%�@\\
\begin{figure}[!ht]
\begin{minipage}[!ht]{0.5\textwidth}
\begin{center}
\includegraphics[width=\textwidth ]{./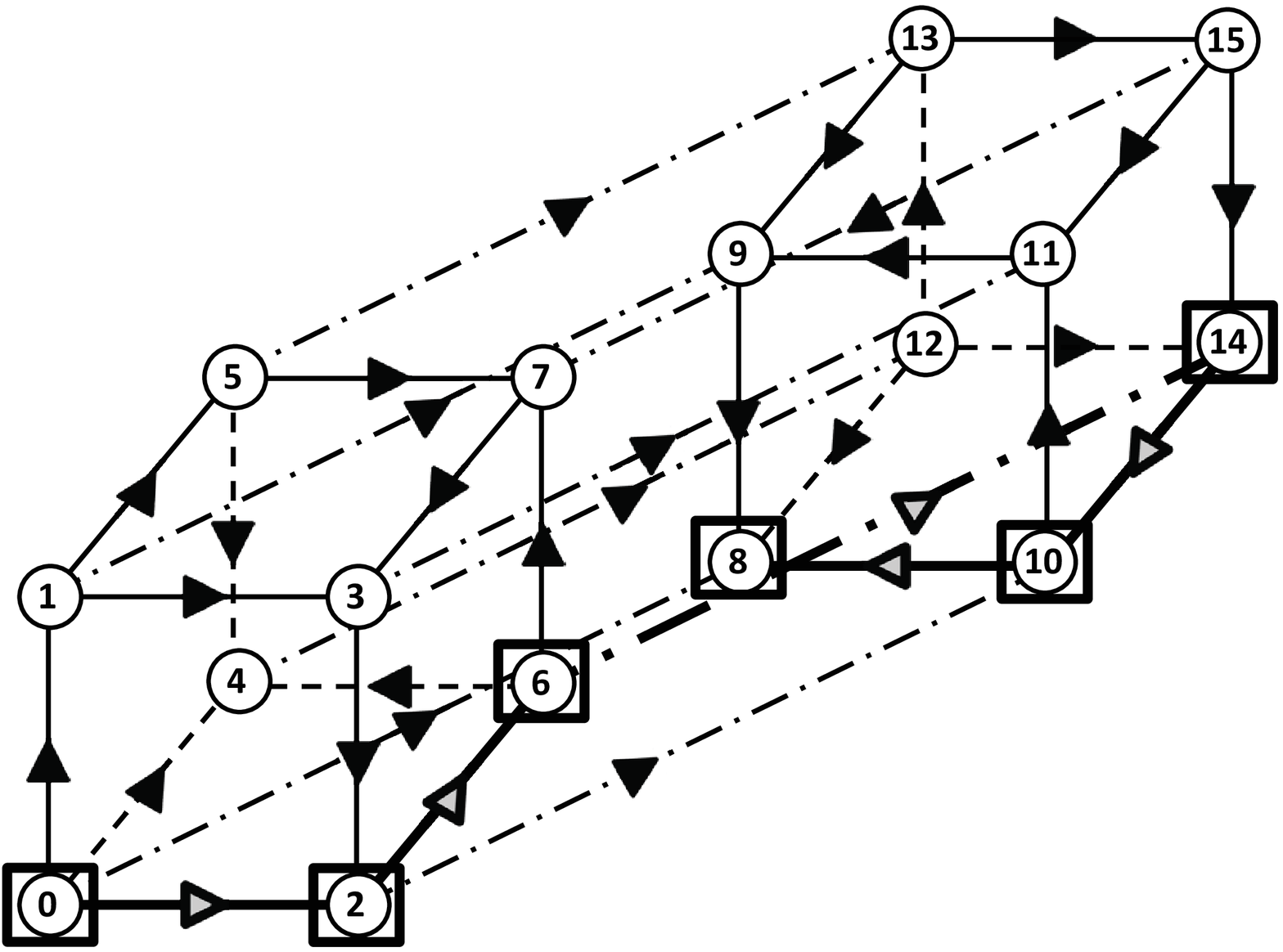}
\end{center}
\caption{Least recently considered rule}
\label{fig:lrcexample}

\end{minipage}
\def\@captype{table}
\scalebox{0.65}[1.3]{
\begin{minipage}[!ht]{0.5\textwidth}
\label{table:lrcexample}
\begin{center}
\begin{tabular}{|c|c|c|c|c|c|c|c|c|c|}
\hline
\multicolumn{1}{|c|}{Vertex}&\multicolumn{8}{|c|}{direction}&{Outgoing directions} \\\cline{2-9}
(binary)& $+1$ & $-1$  & $+2$& $-2$ & $+3$& $-3$ & $+4$     & $-4$  &(bold for chosen) \\\hline
$0(0000)$& $3$ & $7$  & $1$& $5$ & $6$& $4$ & $8$     & $2$  & 
$+1,\boldsymbol{+2},+3,+4$\\\hline
$2(0010)$& $2$ & $6$  & $8$& $4$ & $5$& $3$ & $7$     & $1$  & 
$\boldsymbol{+3},+4$\\\hline
$6(0110)$& $5$ & $1$  & $3$& $7$ & $8$& $6$ & $2$     & $4$  & 
$+1,-2,\boldsymbol{+4}$\\\hline
$14(1110)$& $3$ & $7$  & $1$& $5$ & $6$& $4$ & $8$     & $2$  & 
$\boldsymbol{-3}$\\\hline
$10(1010)$& $7$ & $3$  & $5$& $1$ & $2$& $8$ & $4$     & $6$  & 
$+1,\boldsymbol{-2}$\\\hline
$8(1000)$& $6$ & $2$  & $4$& $8$ & $1$& $7$ & $3$     & $5$  & \\\hline
\end{tabular}

\end{center}
\end{minipage}
}
\end{figure}

\begin{figure}[!h]
\begin{minipage}[!h]{0.5\textwidth}
\begin{center}
\includegraphics[width=\textwidth ]{./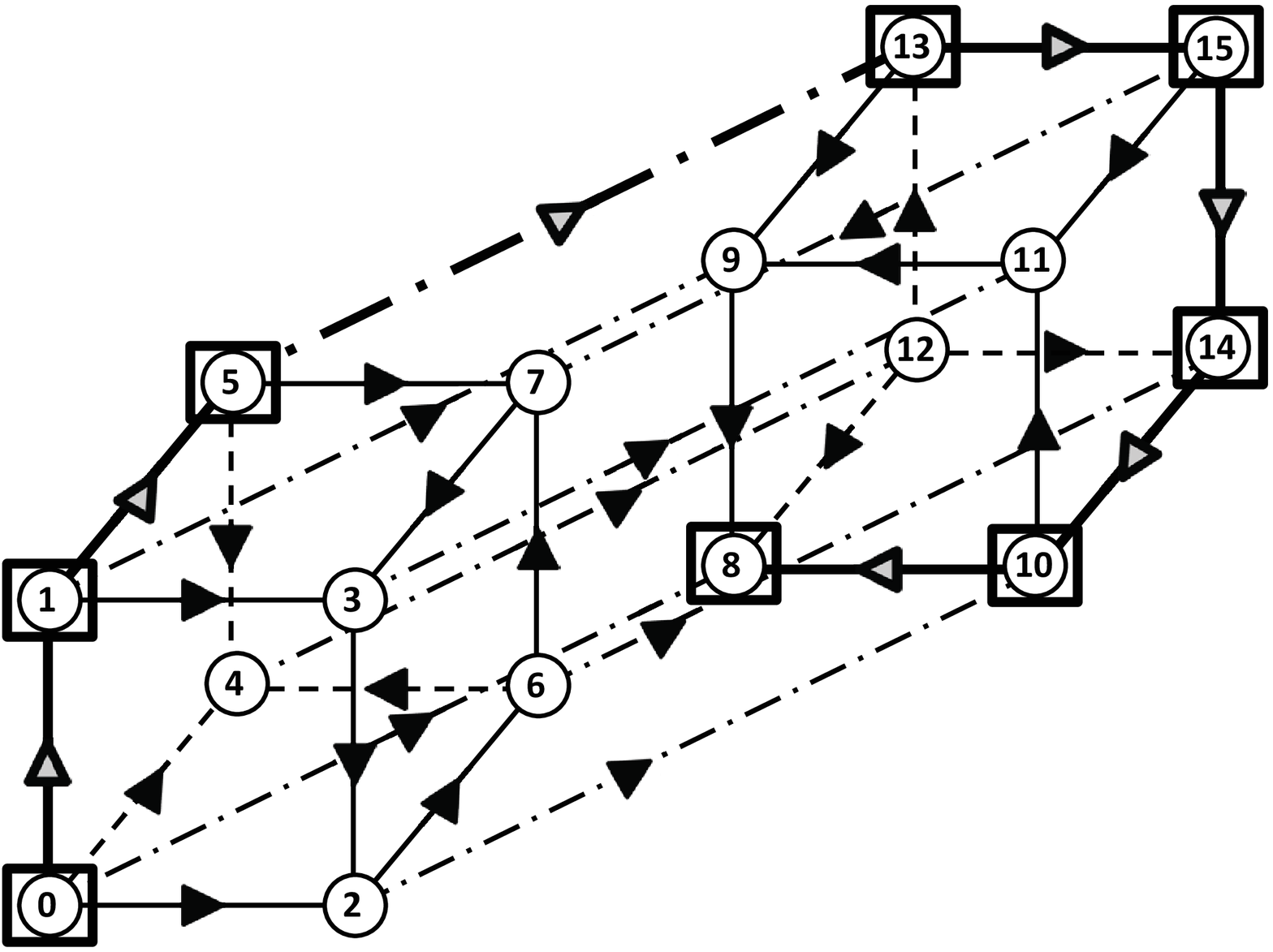}
\end{center}
\caption{Least recently basic rule}
\label{fig:lrbexample}
\end{minipage}
\def\@captype{table}
\scalebox{0.5}[1.3]{
\begin{minipage}[!h]{\textwidth}
\label{table:lrbexample}
\begin{center}
\begin{tabular}{|c|c|c|c|c|c|c|c|c|c|c|}
\hline
\multicolumn{1}{|c|}{Step}&\multicolumn{1}{|c|}{Vertex}&\multicolumn{8}{|c|}{direction}&Outgoing directions \\\cline{3-10}
number&(binary)& $+1$ & $-1$  & $+2$& $-2$ & $+3$& $-3$ & $+4$     & $-4$  & (bold for chosen)\\\hline
1&$0(0000)$& $0$ & $1$  & $0$& $1$ & $0$& $1$ & $0$     & $1$  & $\boldsymbol{+1},+2,+3,+4$\\\hline
2&$1(0001)$& $2$ & $1$  & $0$& $2$ & $0$& $2$ & $0$     & $2$  & $+2,\boldsymbol{+3},+4$\\\hline
3&$5(0101)$& $3$& $1$  & $0$& $3$ & $3$& $2$ & $0$     & $3$  & $-1,+2,\boldsymbol{+4}$\\\hline
4&$13(1101)$& $4$ & $1$  & $0$& $4$ & $4$& $2$ & $4$     & $3$  & $\boldsymbol{+2},-3$\\\hline
5&$15(1111)$& $5$ & $1$  & $5$& $4$ & $5$& $2$ & $5$     & $3$  & $\boldsymbol{-1},-3,-4$\\\hline
6&$14(1110)$& $5$ & $6$  & $6$& $4$ & $6$& $2$ & $6$     & $3$  & $\boldsymbol{-3}$\\\hline
7&$10(1010)$& $5$ & $7$  & $7$& $4$ & $6$& $7$ & $7$     & $3$  & $+1,\boldsymbol{-2}$\\\hline
8&$8(1000)$& $5$ & $8$  & $7$& $8$ & $6$& $8$ & $8$     & $3$  & \\\hline
\end{tabular}
\end{center}
\end{minipage}
}
\end{figure}
\begin{figure}[!t]
\begin{minipage}[!t]{0.5\textwidth}
\begin{center}
\includegraphics[width=\textwidth ]{./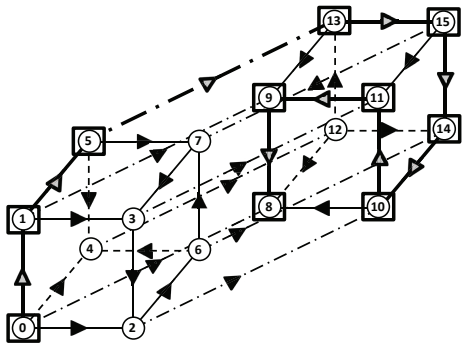}\end{center}
\caption{Least recently entered rule}
\label{fig:lreexample}
\end{minipage}
\def\@captype{table}
\scalebox{0.5}[1.3]{
\begin{minipage}[!t]{\textwidth}
\label{table:lreexample}
\begin{center}
\begin{tabular}{|c|c|c|c|c|c|c|c|c|c|c|}
\hline
\multicolumn{1}{|c|}{Step}&\multicolumn{1}{|c|}{Vertex}&\multicolumn{8}{|c|}{direction}&Outgoing directions \\\cline{3-10}
number&(binary)& $+1$ & $-1$  & $+2$& $-2$ & $+3$& $-3$ & $+4$     & $-4$  & (bold for chosen)\\\hline
1&$0(0000)$& $0$ & $1$  & $0$& $1$ & $0$& $1$ & $0$     & $1$  & $\boldsymbol{+1},+2,+3,+4$\\\hline
2&$1(0001)$& $2$ & $1$  & $0$& $1$ & $0$& $1$ & $0$     & $1$  & $+2,\boldsymbol{+3},+4$\\\hline
3&$5(0101)$& $2$ & $1$  & $0$& $1$ & $3$& $1$ & $0$     & $1$  & $-1,+2,\boldsymbol{+4}$\\\hline
4&$13(1101)$& $2$ & $1$  & $0$& $1$ & $3$& $1$ & $4$     & $1$  & $\boldsymbol{+2},-3$\\\hline
5&$15(1111)$& $2$ & $1$  & $5$& $1$ & $3$& $1$ & $4$     & $1$  & $-1,\boldsymbol{-3},-4$\\\hline
6&$11(1011)$& $2$&  $1$  & $5$& $1$ & $3$& $6$ & $4$     & $1$  & $\boldsymbol{-2}$\\\hline
7&$9(1001)$& $2$ & $1$  & $5$& $7$ & $3$& $6$ & $4$     & $1$  & $\boldsymbol{-1}$\\\hline
8&$8(1000)$& $2$ & $8$  & $5$& $7$ & $3$& $6$ & $4$     & $1$  & \\\hline
\end{tabular}
\end{center}
\end{minipage}
}
\end{figure}
\begin{figure}[!ht]
\begin{minipage}[t]{0.5\textwidth}
\begin{center}
\includegraphics[width=\textwidth ]{./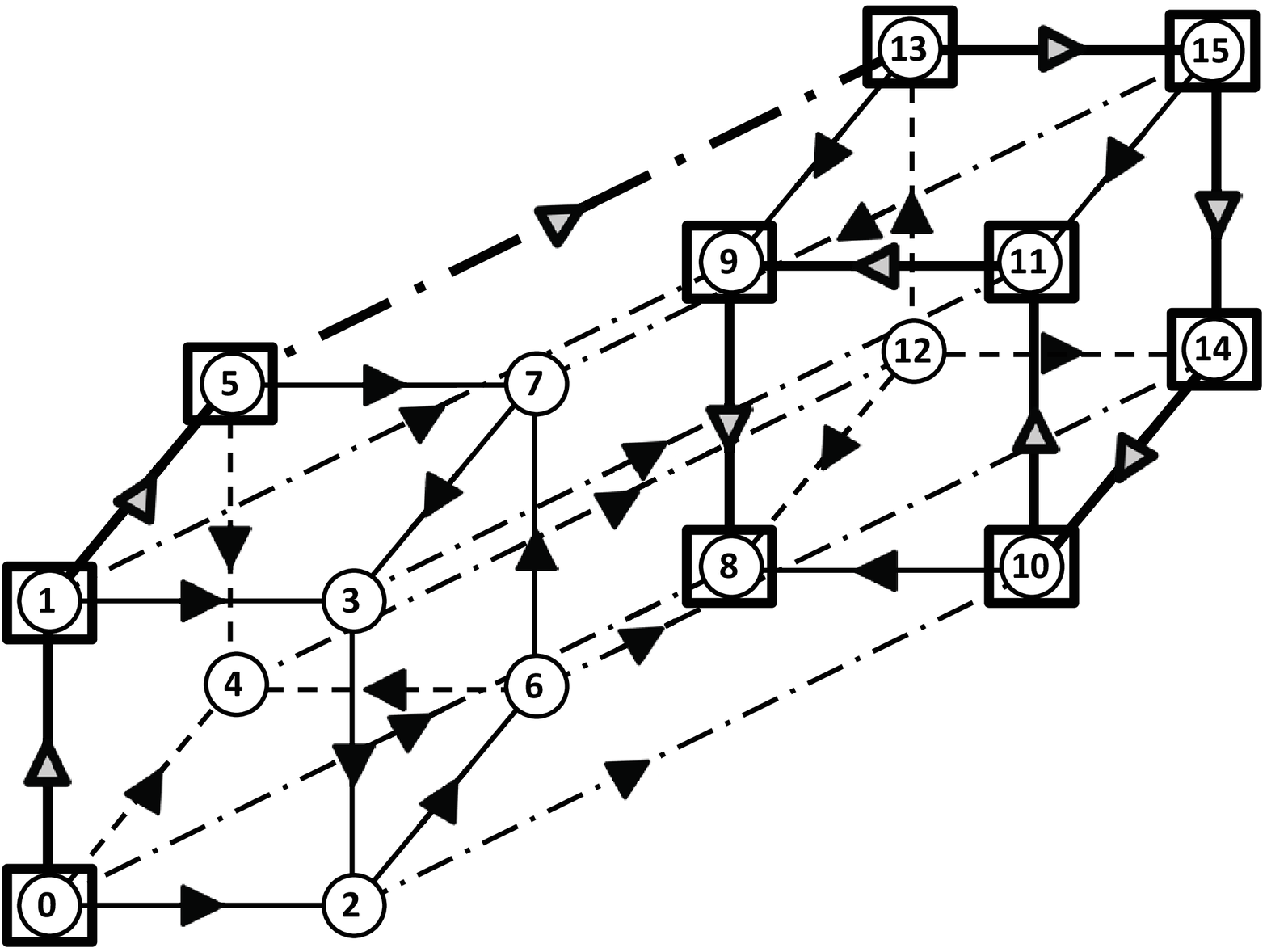}
\end{center}
\caption{Least iterations in basis rule}
\label{fig:lebexample}
\end{minipage}
\hfill
\def\@captype{table}
\scalebox{0.5}[1.3]{
\begin{minipage}[!ht]{\textwidth}
\label{table:lebexample}
\begin{center}
\begin{tabular}{|c|c|c|c|c|c|c|c|c|c|}
\hline
\multicolumn{1}{|c|}{Vertex}&\multicolumn{8}{|c|}{direction}&{Outgoing directions} \\\cline{2-9}
(binary)& $+1$ & $-1$  & $+2$& $-2$ & $+3$& $-3$ & $+4$     & $-4$  &(bold for chosen) \\\hline
$0(0000)$& $0$ & $1$  & $0$& $1$ & $0$& $1$ & $0$     & $1$  & $\boldsymbol{+1},+2,+3,+4$\\\hline
$1(0001)$& $1$ & $1$  & $0$& $2$ & $0$& $2$ & $0$     & $2$  & $+2,\boldsymbol{+3},+4$\\\hline
$5(0101)$& $2$ & $1$  & $0$& $3$ & $1$& $2$ & $0$     & $3$  & $-1,+2,\boldsymbol{+4}$ \\\hline
$13(1101)$& $3$ & $1$  & $0$& $4$ & $2$& $2$ & $1$     & $3$  & $\boldsymbol{+2},-3$\\\hline
$15(1111)$& $4$ & $1$  & $1$& $4$ & $3$& $2$ & $2$     & $3$  & $\boldsymbol{-1},-3,-4$\\\hline
$14(1110)$& $4$ & $2$  & $2$& $4$ & $4$& $2$ & $3$     & $3$  & $\boldsymbol{-3}$\\\hline
$10(1010)$& $4$ & $3$  & $3$& $4$ & $4$& $3$ & $4$     & $3$  & $\boldsymbol{+1},-2$\\\hline
$11(1011)$& $5$ & $3$  & $4$& $4$ & $4$& $4$ & $5$    & $3$  & $\boldsymbol{-2}$\\\hline
$9(1001)$& $5$ & $3$  & $4$& $5$ & $4$& $5$ & $6$     & $3$  & $\boldsymbol{-1}$\\\hline
$8(1000)$& $5$ & $4$  & $4$& $6$ & $4$& $6$ & $7$     & $3$  & \\\hline
\end{tabular}
\end{center}
\end{minipage}
}
\end{figure}

\end{document}